\newtheorem{theorem}{Theorem}[section]
\theoremstyle{plain}
\newtheorem{definition}[theorem]{Definition}
\newtheorem{lemma}[theorem]{Lemma}
\newtheorem{proposition}[theorem]{Proposition}
\newtheorem{remark}[theorem]{Remark}
\numberwithin{equation}{section}
\begin{document}
\title[Parabolic-type Equations and Ultrametric Random Walks]{Nonlocal Operators, Parabolic-type Equations, and Ultrametric Random Walks}
\author{L. F. Chacón-Cortes}
\author{W. A. Zúñiga-Galindo}
\address{Centro de Investigacion y de Estudios Avanzados del I.P.N., Departamento de
Matematicas, Av. Instituto Politecnico Nacional 2508, Col. San Pedro
Zacatenco, Mexico D.F., C.P. 07360, Mexico}
\email{fchaconc@math.cinvestav.edu.mx, wazuniga@math.cinvestav.edu.mx}
\thanks{The second author was partially supported by Conacyt (Mexico), Grant \# 127794.}
\subjclass[2000]{Primary 82B41, 82C44; Secondary 26E30}
\keywords{Random walks, diffusion, dynamics of disordered systems, relaxation of complex
systems, p-adics, non-Archimean analysis.}

\begin{abstract}
In this article we introduce a new type of nonlocal operators and study the
Cauchy problem for certain parabolic-type pseudodifferential equations
naturally associated to these operators. Some of these equations are the
$p$-adic master equations of certain models of complex systems introduced by
Avetisov et al. The fundamental solutions of these parabolic-type equations
are transition functions of random walks on the $n$-dimensional vector space
over the field of $p$-adic numbers. We study some properties of these random
walks, including the first passage time.

\end{abstract}
\maketitle

\section{Introduction}

During the last twenty-five years there have been a strong interest on random
walks on ultrametric spaces mainly due its connections with models of complex
systems, such as glasses and proteins. Random walks on ultrametric spaces are
very convenient for describing phenomena whose space of states display a
hierarchical structure, see e.g. \cite{Av-1}-\cite{Av-7}, \cite{Dra-Kh-K-V},
\cite{Ga-Zu}, \cite{Ka}, \cite{Koch}, \cite{K-Kos}, \cite{M-P-V},
\cite{R-T-V}, \cite{Va1}, \cite{V-V-Z}, \cite{Zu}, and the references therein.
In the middle of the eighties G. Frauenfelder, G. Parisi, D. Stain, among
others, proposed using ultrametric spaces to describe the states of complex
biological systems, which possess a natural hierarchical organization, see
e.g. \cite{M-P-V}, \cite{R-T-V}. Avetisov et al. have constructed a wide
variety of models of ultrametric diffusion constrained by hierarchical energy
landscapes, see \cite{Av-1}-\cite{Av-7}. These models can be applied, among
other things, to the study the relaxation of biological complex systems
\cite{Av-4}. From a mathematical point view, in these models the
time-evolution of a complex system is described by a $p$-adic master equation
(a parabolic-type pseudodifferential equation)\ which controls the
time-evolution of a transition function of a random walk on an ultrametric
space, and the random walk describes the dynamics of the system in the space
of configurational states which is approximated by an ultrametric space
($\mathbb{Q}_{p}$).

This article continues and extends some of the mathematical results given in
\cite{Av-4}, \cite{Av-2}. We introduced a new class of nonlocal operators
which includes the Vladimirov operator. These operators are determined by a
radial function which determines the structure of the energy landscape being
used. In \cite{Av-4}\ several solvable models for degenerated landscapes of
types linear, logarithmic and exponential were studied. In this article we
\ study a large class of solvable models, we have called them polynomial and
exponential landscapes, which includes the linear and exponential landscapes
considered in \cite{Av-4}, see Section \ref{Complex_systems}. We attach to
each of these operators a Markov process (random walk) which is bounded and
has do discontinuities other than jumps, see Theorems \ref{Thm1}-\ref{Thm2}.
We also solve the Cauchy problem for the master equations attached to these
operators, see Theorem \ref{Thm3}. Finally, we study the first passage time
problem for the random walks attached to polynomial landscapes, see Theorem
\ref{Thm4}. All the results are formulated in arbitrary dimension.

The article is organized as follows. In section \ref{Section1} we review the
basic notions of $p$-adic analysis. In Section \ref{Sect2} we introduce a new
type of nonlocal operators, these operators encode the underlying energy
landscapes studied here. We show that these operators are pseudodifferential
and give some properties of their symbols. In Sections \ref{Sect3}-\ref{Sect4}
we study the Markov processes attached to the operators introduced in Section
\ref{Sect2}. In Section \ref{Sect5} we study the Cauchy problem for the master
equations which are pseudodifferential equations of parabolic-type. Finally,
in Section \ref{Sect6} we consider the problem of the first passage time for
random walks on $p$-adic spaces. A particular case of this problem was
considered earlier by Avetisov, Bikulov and Zubarev in \cite{Av-2}. In the
present article more general evolutions are considered and the regimes of
recurrent and transient random walks are investigated.

\section{\label{Section1}Preliminaries}

In this section we fix the notation and collect some basic results on $p$-adic
analysis that we will use through the article. For a detailed exposition on
$p$-adic analysis the reader may consult \cite{A-K-S}, \cite{Taibleson},
\cite{V-V-Z}.

\subsection{The field of $p$-adic numbers}

Along this article $p$ will denote a prime number different from $2$. The
field of $p-$adic numbers $\mathbb{Q}_{p}$ is defined as the completion of the
field of rational numbers $\mathbb{Q}$ with respect to the $p-$adic norm
$|\cdot|_{p}$, which is defined as
\[
|x|_{p}=%
\begin{cases}
0 & \text{if }x=0\\
p^{-\gamma} & \text{if }x=p^{\gamma}\dfrac{a}{b},
\end{cases}
\]
where $a$ and $b$ are integers coprime with $p$. The integer $\gamma:=ord(x)$,
with $ord(0):=+\infty$, is called the\textit{ }$p-$\textit{adic order of} $x$.
We extend the $p-$adic norm to $\mathbb{Q}_{p}^{n}$ by taking%
\[
||x||_{p}:=\max_{1\leq i\leq n}|x_{i}|_{p},\qquad\text{for }x=(x_{1}%
,\dots,x_{n})\in\mathbb{Q}_{p}^{n}.
\]
We define $ord(x)=\min_{1\leq i\leq n}\{ord(x_{i})\}$, then $||x||_{p}%
=p^{-\text{ord}(x)}$. Any $p-$adic number $x\neq0$ has a unique expansion
$x=p^{ord(x)}\sum_{j=0}^{\infty}x_{j}p^{j}$, where $x_{j}\in\{0,1,2,\dots
,p-1\}$ and $x_{0}\neq0$. By using this expansion, we define \textit{the
fractional part of }$x\in\mathbb{Q}_{p}$, denoted $\{x\}_{p}$, as the rational
number
\[
\{x\}_{p}=%
\begin{cases}
0 & \text{if }x=0\text{ or }ord(x)\geq0\\
p^{\text{ord}(x)}\sum_{j=0}^{-ord(x)-1}x_{j}p^{j} & \text{if }ord(x)<0.
\end{cases}
\]
For $\gamma\in\mathbb{Z}$, denote by $B_{\gamma}^{n}(a)=\{x\in\mathbb{Q}%
_{p}^{n}:||x-a||_{p}\leq p^{\gamma}\}$ \textit{the ball of radius }$p^{\gamma
}$ \textit{with center at} $a=(a_{1},\dots,a_{n})\in\mathbb{Q}_{p}^{n}$, and
take $B_{\gamma}^{n}(0):=B_{\gamma}^{n}$. Note that $B_{\gamma}^{n}%
(a)=B_{\gamma}(a_{1})\times\cdots\times B_{\gamma}(a_{n})$, where $B_{\gamma
}(a_{i}):=\{x\in\mathbb{Q}_{p}:|x-a_{i}|_{p}\leq p^{\gamma}\}$ is the
one-dimensional ball of radius $p^{\gamma}$ with center at $a_{i}\in
\mathbb{Q}_{p}$. The ball $B_{0}^{n}(0)$ is equals the product of $n$ copies
of $B_{0}(0):=\mathbb{Z}_{p}$, \textit{the ring of }$p-$\textit{adic
integers}. We denote by $\Omega(\left\Vert x\right\Vert _{p})$ the
characteristic function of $B_{0}^{n}(0)$. For more general sets, say Borel
sets, we use ${\LARGE 1}_{A}\left(  x\right)  $ to denote the characteristic
function of $A$.

\subsection{The Bruhat-Schwartz space}

A complex-valued function $\varphi$ defined on $\mathbb{Q}_{p}^{n}$ is
\textit{called locally constant} if for any $x\in\mathbb{Q}_{p}^{n}$ there
exist an integer $l(x)\in\mathbb{Z}$ such that%
\begin{equation}
\varphi(x+x^{\prime})=\varphi(x)\text{ for }x^{\prime}\in B_{l(x)}^{n}.
\label{local_constancy}%
\end{equation}
A function $\varphi:\mathbb{Q}_{p}^{n}\rightarrow\mathbb{C}$ is called a
\textit{Bruhat-Schwartz function (or a test function)} if it is locally
constant with compact support. The $\mathbb{C}$-vector space of
Bruhat-Schwartz functions is denoted by $S(\mathbb{Q}_{p}^{n}):=S$. For
$\varphi\in S(\mathbb{Q}_{p}^{n})$, the largest of such number $l=l(\varphi)$
satisfying (\ref{local_constancy}) is called \textit{the exponent of local
constancy of} $\varphi$.

Let $S^{\prime}(\mathbb{Q}_{p}^{n}):=S^{\prime}$ denote the set of all
functionals (distributions) on $S(\mathbb{Q}_{p}^{n})$. All functionals on
$S(\mathbb{Q}_{p}^{n})$ are continuous.

Set $\Psi(y)=\exp(2\pi i\{y\}_{p})$ for $y\in\mathbb{Q}_{p}$. The map
$\Psi(\cdot)$ is an additive character on $\mathbb{Q}_{p}$, i.e. a continuos
map from $\mathbb{Q}_{p}$ into the unit circle satisfying $\Psi(y_{0}%
+y_{1})=\Psi(y_{0})\Psi(y_{1})$, $y_{0},y_{1}\in\mathbb{Q}_{p}$.

Given $\xi=(\xi_{1},\dots,\xi_{n})$ and $x=(x_{1},\dots,x_{n})\in
\mathbb{Q}_{p}^{n}$, we set $\xi\cdot x:=\sum_{j=1}^{n}\xi_{j}x_{j}$. The
Fourier transform of $\varphi\in S(\mathbb{Q}_{p}^{n})$ is defined as
\[
(\mathcal{F}\varphi)(\xi)=\int_{\mathbb{Q}_{p}^{n}}\Psi(-\xi\cdot
x)\varphi(\xi)d^{n}x\quad\text{for }\xi\in\mathbb{Q}_{p}^{n},
\]
where $d^{n}x$ is the Haar measure on $\mathbb{Q}_{p}^{n}$ normalized by the
condition $vol(B_{0}^{n})=1$. The Fourier transform is a linear isomorphism
from $S(\mathbb{Q}_{p}^{n})$ onto itself satisfying $(\mathcal{F}%
(\mathcal{F}\varphi))(\xi)=\varphi(-\xi)$. We will also use the notation
$\mathcal{F}_{x\rightarrow\xi}\varphi$ and $\widehat{\varphi}$\ for the
Fourier transform of $\varphi$.

\subsubsection{Fourier transform}

The Fourier transform $\mathcal{F}\left[  f\right]  $ of a distribution $f\in
S^{\prime}\left(  \mathbb{Q}_{p}^{n}\right)  $ is defined by%
\[
\left(  \mathcal{F}\left[  f\right]  ,\varphi\right)  =\left(  f,\mathcal{F}%
\left[  \varphi\right]  \right)  \text{ for all }\varphi\in S\left(
\mathbb{Q}_{p}^{n}\right)  \text{.}%
\]
The Fourier transform $f\rightarrow\mathcal{F}\left[  f\right]  $ is a linear
isomorphism from $S^{\prime}\left(  \mathbb{Q}_{p}^{n}\right)  $\ onto
$S^{\prime}\left(  \mathbb{Q}_{p}^{n}\right)  $. Furthermore, $f=\mathcal{F}%
\left[  \mathcal{F}\left[  f\right]  \left(  -\xi\right)  \right]  $.

\section{\label{Sect2}A New Class of Nonlocal Operators}

Take \ $\mathbb{R}_{+}:=\left\{  x\in\mathbb{R};x\geq0\right\}  $, and fix a
function%
\[
w:\mathbb{Q}_{p}^{n}\rightarrow\mathbb{R}_{+}%
\]
satisfying the following properties:

\noindent(i) $w\left(  y\right)  $ is a radial (i.e. $w\left(  y\right)
=w\left(  \left\Vert y\right\Vert _{p}\right)  $)\ and continuous function;

\noindent(ii) $w\left(  y\right)  =0$ if and only if $y=0$;

\noindent(iii) there exists constants $C_{0}>0$, $M\in\mathbb{Z}$, and
$\alpha_{1}>n$ such that
\[
C_{0}\left\Vert y\right\Vert _{p}^{\alpha_{1}}\leq w(\left\Vert y\right\Vert
_{p})\text{, for }\left\Vert y\right\Vert _{p}\geq p^{M}.
\]

Note that condition (iii) implies that
\begin{equation}
{\int\limits_{\left\Vert y\right\Vert _{p}\geq p^{M}}}\frac{d^{n}y}{w\left(
\left\Vert y\right\Vert _{p}\right)  }<\infty. \label{Ec1}%
\end{equation}
In addition, since $w\left(  y\right)  $\ is a continuous function,
(\ref{Ec1}) holds for any $M\in\mathbb{Z}$. Convergence conditions for
integral kernels of type (\ref{Ec1}) were considered in \cite{Koz1},
\cite{Koz-Kh} and \cite{Kh-Koz2}.

We define%

\begin{equation}
(\boldsymbol{W}\varphi)(x)=\kappa{\int\limits_{\mathbb{Q}_{p}^{n}}}%
\frac{\varphi\left(  x-y\right)  -\varphi\left(  x\right)  }{w\left(
y\right)  }d^{n}y\text{, for }\varphi\in S\text{,} \label{W}%
\end{equation}
where $\kappa$ is a positive constant.

\begin{lemma}
\label{lemma1}For $1\leq\rho\leq\infty$,%
\[%
\begin{array}
[c]{ccc}%
S\left(  \mathbb{Q}_{p}^{n}\right)  & \rightarrow & L^{\rho}\left(
\mathbb{Q}_{p}^{n}\right) \\
&  & \\
\varphi & \rightarrow & \boldsymbol{W}\varphi
\end{array}
\]
is a well-defined linear operator. Furthermore,
\begin{equation}
\mathcal{F}\left[  \boldsymbol{W}\varphi\right]  (\xi)=-\kappa\left(
{\int\limits_{\mathbb{Q}_{p}^{n}}}\frac{1-\Psi\left(  -y\cdot\xi\right)
}{w\left(  y\right)  }d^{n}y\right)  \mathcal{F}\left[  \varphi\right]
\left(  \xi\right)  . \label{W_for}%
\end{equation}

\end{lemma}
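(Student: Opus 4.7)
The plan is to exploit the local constancy of $\varphi\in S$ to replace the possibly singular kernel $1/w$ by one supported away from $0$, then pass to Fourier transform via Fubini, and finally recognize the result as a multiplier.

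First I would set $l=l(\varphi)$, the exponent of local constancy of $\varphi$. For $\|y\|_p\le p^l$ the numerator $\varphi(x-y)-\varphi(x)$ vanishes identically, so the integral defining $\boldsymbol{W}\varphi$ reduces to one over $\|y\|_p\ge p^{l+1}$. On this region, condition (iii) together with continuity of $w$ and the observation following (\ref{Ec1}) guarantees $\int_{\|y\|_p\ge p^{l+1}} w(y)^{-1}d^n y<\infty$. The $L^\infty$ estimate then follows from the crude bound $|\varphi(x-y)-\varphi(x)|\le 2\|\varphi\|_\infty$; the $L^1$ estimate follows from Fubini and translation invariance of Haar measure, which yields $\int|\varphi(x-y)-\varphi(x)|d^n x\le 2\|\varphi\|_1$ uniformly in $y$; the remaining $L^\rho$ cases come from the elementary inclusion $L^1\cap L^\infty\subset L^\rho$ (or Riesz--Thorin).

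For the Fourier identity I would apply Fubini to the integral over $\{\|y\|_p\ge p^{l+1}\}\times\mathbb{Q}_p^n$ (the double integral is absolutely convergent by the $L^1$ bound just obtained), then use $\mathcal{F}[\varphi(\cdot-y)](\xi)=\Psi(-\xi\cdot y)\mathcal{F}\varphi(\xi)$ to get
\begin{equation*}
\mathcal{F}[\boldsymbol{W}\varphi](\xi)= -\kappa\,\mathcal{F}\varphi(\xi)\int_{\|y\|_p\ge p^{l+1}}\frac{1-\Psi(-\xi\cdot y)}{w(y)}\,d^n y.
\end{equation*}
To reach the stated formula I still need to enlarge the region of integration to all of $\mathbb{Q}_p^n$. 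Here I would invoke the standard duality between compact support and local constancy: since $\varphi$ has exponent of local constancy $l$, $\mathcal{F}\varphi$ is supported in $B_{-l}^n=\{\|\xi\|_p\le p^{-l}\}$. For $\xi$ in that support and $\|y\|_p\le p^l$ one has $|\xi\cdot y|_p\le 1$, hence $\Psi(-\xi\cdot y)=1$ and the integrand $(1-\Psi(-\xi\cdot y))/w(y)$ vanishes; for $\xi$ outside the support the prefactor $\mathcal{F}\varphi(\xi)$ kills the product. Either way, appending the region $\|y\|_p\le p^l$ contributes nothing, and (\ref{W_for}) follows.

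The main obstacle I anticipate is exactly this last step: justifying the extension to $\mathbb{Q}_p^n$ without colliding with the possible non-integrability of $1/w$ near $y=0$. The resolution through the support constraint on $\mathcal{F}\varphi$ is essentially the whole content beyond bookkeeping; everything else is a routine application of Fubini backed by condition (iii).
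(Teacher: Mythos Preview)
Your proposal is correct and follows essentially the same approach as the paper: both exploit local constancy of $\varphi$ to truncate the $y$-integration away from the origin, use the integrability of $1/w$ there, and then apply Fubini to obtain the Fourier multiplier formula. The only cosmetic differences are that the paper packages the $L^\rho$ bound via the convolution decomposition $\boldsymbol{W}\varphi=\kappa\,(w^{-1}\mathbf{1}_{\|y\|_p\ge p^M})\ast\varphi-\kappa\big(\int_{\|y\|_p\ge p^M}w^{-1}\big)\varphi$ together with Young's inequality (rather than endpoint bounds plus interpolation), and invokes Fubini directly on $\mathbb{Q}_p^n\times\mathbb{Q}_p^n$ without isolating the support argument for $\mathcal{F}\varphi$ that you spell out.
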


\begin{proof}
Note that
\begin{equation}
(\boldsymbol{W}\varphi)(x)=\kappa\frac{{\LARGE 1}_{\mathbb{Q}_{p}%
^{n}\smallsetminus B_{p^{M}}^{n}}\left(  x\right)  }{w\left(  x\right)  }%
\ast\varphi\left(  x\right)  -\kappa\varphi\left(  x\right)  \left(  \text{
}{\int\limits_{\left\Vert y\right\Vert _{p}\geq p^{M}}}\frac{d^{n}y}{w\left(
y\right)  }\right)  , \label{Ec2}%
\end{equation}
for some constant $M=M(\varphi)$. If $\varphi\in S\subset L^{\rho}$, for
$1\leq\rho\leq\infty$, (\ref{Ec1}), then the Young inequality implies that the
first term on the right-hand side of (\ref{Ec2}) belongs to $L^{\rho}$\ for
$1\leq\rho\leq\infty$, and by (\ref{Ec1}) the second term in (\ref{Ec2}) also
belongs to $L^{\rho}$\ for $1\leq\rho\leq\infty$. Finally, formula
(\ref{W_for}) follows from Fubini's theorem, since
\[
\left\vert \frac{\varphi\left(  x-y\right)  -\varphi\left(  x\right)
}{w\left(  y\right)  }\right\vert \in L^{1}\left(  \mathbb{Q}_{p}^{n}%
\times\mathbb{Q}_{p}^{n},d^{n}xd^{n}y\right)  .
\]

\end{proof}

We set
\[
A_{w}\left(  \xi\right)  :={\int\limits_{\mathbb{Q}_{p}^{n}}}\frac
{1-\Psi\left(  -y\cdot\xi\right)  }{w\left(  y\right)  }d^{n}y.
\]

\begin{lemma}
\label{lemma2}The function $A_{w}\left(  \xi\right)  $ has the following
properties: (i) for $\left\Vert \xi\right\Vert _{p}=p^{-\gamma}\neq0$, with
$\gamma=ord(\xi)$,
\begin{equation}
A_{w}\left(  p^{-\gamma}\right)  =(1-p^{-n})\sum_{j=\gamma+2}^{\infty}%
\frac{p^{nj}}{w(p^{j})}+\frac{p^{n\gamma}}{w(p^{\gamma+1})}; \label{fomula_A}%
\end{equation}
(ii) it is radial, positive, continuous, and $A_{w}\left(  0\right)  =0$,
(iii) $A_{w}\left(  p^{-ord(\xi)}\right)  $ is a decreasing function of
$ord(\xi)$.
\end{lemma}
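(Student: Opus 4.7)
The plan is to establish (i) by partitioning $\mathbb{Q}_p^n \setminus \{0\}$ into the spheres $S_j = \{y : \|y\|_p = p^j\}$, on each of which $w$ is constant by radiality, and then to read off (ii) and (iii) from that explicit formula. Writing
\[
A_w(\xi) = \sum_{j \in \mathbb{Z}} \frac{1}{w(p^j)} \int_{S_j} \bigl(1 - \Psi(-y \cdot \xi)\bigr) d^n y,
\]
with the interchange justified by the absolute integrability used in the proof of Lemma \ref{lemma1}, I would evaluate each sphere integral via $\int_{S_j} = \int_{B_j^n} - \int_{B_{j-1}^n}$ and the classical identity $\int_{B_k^n} \Psi(-y\cdot\xi) d^n y = p^{nk}\,\Omega(p^k \|\xi\|_p)$. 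When $\|\xi\|_p = p^{-\gamma}$, the ultrametric inequality forces $\Psi(-y\cdot\xi) = 1$ on $B_\gamma^n$, so the spheres $S_j$ with $j \leq \gamma$ contribute nothing; for $j \geq \gamma+2$ the character integrates to $0$ and the integrand contributes the full sphere volume $p^{nj}(1-p^{-n})$; and the single sphere $S_{\gamma+1}$ produces the isolated term recorded in (\ref{fomula_A}).

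Parts (ii) and (iii) should then be read off from (\ref{fomula_A}). Radiality and positivity on $\mathbb{Q}_p^n \setminus \{0\}$, as well as $A_w(0) = 0$, are immediate, since every summand in the formula is positive and the $\xi$-dependence occurs only through $\gamma = \mathrm{ord}(\xi)$. Continuity away from the origin holds because $\mathrm{ord}(\cdot)$ is locally constant on $\mathbb{Q}_p^n \setminus \{0\}$. Continuity at the origin reduces to $A_w(p^{-\gamma}) \to 0$ as $\gamma \to +\infty$: the tail sum $\sum_{j \geq \gamma+2} p^{nj}/w(p^j)$ vanishes as a tail of the convergent integral (\ref{Ec1}), while the isolated term is dominated by $C_0^{-1} p^{-(\alpha_1 - n)(\gamma+1)}$, which tends to $0$ by hypothesis (iii) on $w$ and the condition $\alpha_1 > n$.

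For (iii) I would subtract (\ref{fomula_A}) at consecutive values $\gamma$ and $\gamma+1$. The tail sum loses exactly its $j = \gamma+2$ summand, the isolated term shifts index, and the resulting difference collapses to a short combination of $1/w(p^{\gamma+1})$ and $1/w(p^{\gamma+2})$. Showing that this combination is strictly positive then yields the stated monotonicity.

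The main obstacle I anticipate is the bookkeeping in step (iii): the sign of the final expression depends sensitively on the exact coefficient produced by the isolated sphere $S_{\gamma+1}$, so care is needed in the algebraic manipulation. In particular, I would check whether the coefficient in (\ref{fomula_A}) forces positivity from the arithmetic hypothesis $p \neq 2$ alone (through $p^n \geq 3 > 2$), or whether an implicit monotonicity of $w$ along the sequence $\{p^j\}$ is required; this is the point at which the argument is most prone to slip.
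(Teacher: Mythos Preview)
Your proposal is correct and follows essentially the same route as the paper: decompose $\mathbb{Q}_p^n\setminus\{0\}$ into spheres, evaluate the character integral on each sphere, and read off (ii)--(iii) from the resulting formula. The only cosmetic difference is that the paper first substitutes $\xi=p^\gamma\xi_0$ to reduce every sphere integral to an integral over the unit sphere $U=\{\|y\|_p=1\}$, whereas you compute $\int_{S_j}$ directly via $\int_{B_j^n}-\int_{B_{j-1}^n}$; the two computations are equivalent.

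On your flagged concern in (iii): the paper simply asserts monotonicity from the formula without displaying the subtraction, but your sketch is the right one, and no monotonicity of $w$ is needed. Carrying out the difference gives
\[
A_w(p^{-\gamma})-A_w(p^{-(\gamma+1)})=\frac{p^{n(\gamma+1)}(p^{n}-2)}{w(p^{\gamma+2})}+\frac{p^{n\gamma}}{w(p^{\gamma+1})},
\]
and since the paper's standing hypothesis is $p\neq 2$, one has $p^n\geq 3>2$, so both terms are strictly positive. (In fact $p^n\geq 2$ for all primes, so the expression is nonnegative even without the restriction; the strict decrease comes from the second term.)
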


\begin{proof}
We write $\xi=p^{\gamma}\xi_{0},$ with $\gamma=ord(\xi)$ and $\left\Vert
\xi_{0}\right\Vert _{p}=1$. Then%

\begin{equation}
A_{w}\left(  \xi\right)  ={\int\limits_{\mathbb{Q}_{p}^{n}}}\frac
{1-\Psi\left(  -p^{\gamma}y\cdot\xi_{0}\right)  }{w\left(  \left\Vert
y\right\Vert _{p}\right)  }d^{n}y={p^{\gamma n}\int\limits_{\mathbb{Q}_{p}%
^{n}}}\frac{1-\Psi\left(  -z\cdot\xi_{0}\right)  }{w\left(  p^{\gamma
}\left\Vert z\right\Vert _{p}\right)  }d^{n}z. \label{Ec3}%
\end{equation}

We now note that
\[
\mathbb{Q}_{p}^{n}\smallsetminus\left\{  0\right\}  ={\bigsqcup\limits_{j\in
\mathbb{Z}}}p^{j}U
\]
with
\[
U:=\left\{  y\in\mathbb{Q}_{p}^{n};\left\Vert y\right\Vert _{p}=1\right\}  .
\]
By using this partition and (\ref{Ec3}), we have
\begin{align*}
A_{w}\left(  \xi\right)   &  ={\sum\limits_{j\in\mathbb{Z}}}p^{\gamma n}\text{
}{\int\limits_{p^{j}U}}\frac{1-\Psi\left(  -z\cdot\xi_{0}\right)  }{w\left(
p^{\gamma}\left\Vert z\right\Vert _{p}\right)  }d^{n}z\\
&  {=\sum\limits_{j\in\mathbb{Z}}}\text{ }\frac{p^{-jn+\gamma n}}{w\left(
p^{-j+\gamma}\right)  }\left\{  {\left(  1-p^{-n}\right)  -\int\limits_{U}%
}\Psi\left(  -p^{j}y\cdot\xi_{0}\right)  d^{n}y\right\}  .
\end{align*}
By using the formula
\begin{equation}
{\int\limits_{U}}\Psi\left(  -p^{j}y\cdot\xi_{0}\right)  d^{n}y=\left\{
\begin{array}
[c]{lll}%
1-p^{-n} & \text{if} & j\geq0\\
&  & \\
-p^{-n} & \text{if} & j=-1\\
&  & \\
0 & \text{if} & j<-1,
\end{array}
\right.  \label{formula}%
\end{equation}
we get
\begin{align}
A_{w}\left(  \xi\right)   &  =(1-p^{-n})\sum_{j=2}^{\infty}\frac
{p^{n(\gamma+j)}}{w(p^{\gamma+j})}+\frac{p^{n\gamma}}{w(p^{\gamma+1}%
)}\nonumber\\
&  =(1-p^{-n})\sum_{j=\gamma+2}^{\infty}\frac{p^{nj}}{w(p^{j})}+\frac
{p^{n\gamma}}{w(p^{\gamma+1})}. \label{Ec4}%
\end{align}
From (\ref{Ec4}) follows that $A_{w}\left(  \xi\right)  $ is radial, positive,
continuous outside of the origin, and that $A_{w}\left(  p^{-ord(\xi)}\right)
$ is a decreasing function of $ord(\xi)$. To show the continuity at origin, we
proceed as follows. Since $\sum_{j=M}^{\infty}\frac{p^{nj}}{w(p^{j})}<\infty$,
c.f. (\ref{Ec1}),
\[
A_{w}\left(  0\right)  :=\underset{\gamma\rightarrow\infty}{\lim}%
(1-p^{-n})\sum_{j=\gamma+2}^{\infty}\frac{p^{nj}}{w(p^{j})}+\underset
{\gamma\rightarrow\infty}{\lim}\frac{p^{n\gamma}}{w(p^{\gamma+1})}=0.
\]

\end{proof}

\begin{proposition}
\label{prop1}(i) $\left(  \boldsymbol{W}\varphi\right)  \left(  x\right)
=-\kappa\mathcal{F}_{\xi\rightarrow x}^{-1}\left(  A_{w}(\left\Vert
\xi\right\Vert _{p})\mathcal{F}_{x\rightarrow\xi}\varphi\right)  $ for
$\varphi\in S\left(  \mathbb{Q}_{p}^{n}\right)  $, and $\boldsymbol{W}%
\varphi\in C\left(  \mathbb{Q}_{p}^{n}\right)  \cap L^{\rho}\left(
\mathbb{Q}_{p}^{n}\right)  $, for $1\leq\rho\leq\infty$. The Operator
$\boldsymbol{W}$ extends to an unbounded and densely defined operator in
$L^{2}\left(  \mathbb{Q}_{p}^{n}\right)  $ with domain
\begin{equation}
Dom(\boldsymbol{W})=\left\{  \varphi\in L^{2};A_{w}(\left\Vert \xi\right\Vert
_{p})\mathcal{F}\varphi\in L^{2}\right\}  . \label{Dom_W}%
\end{equation}

\noindent(ii) $\left(  -\boldsymbol{W},Dom(\boldsymbol{W})\right)  $ is
self-adjoint and positive operator.

\noindent(iii) $\boldsymbol{W}$ is the infinitesimal generator of a
contraction $C_{0}$ semigroup $\left(  \mathcal{T}(t)\right)  _{t\geq0}$.
Moreover, the semigroup $\left(  \mathcal{T}(t)\right)  _{t\geq0}$ is bounded
holomorphic with angle $\pi/2$.
\end{proposition}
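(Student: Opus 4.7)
The key observation is that Lemma \ref{lemma1} already exhibits $\boldsymbol{W}$ on the dense subspace $S$ as the Fourier multiplier with symbol $-\kappa A_{w}(\|\xi\|_{p})$. Once this is recognized, everything in the proposition reduces either to a routine fact about multiplication operators or to a direct application of the spectral theorem for self-adjoint operators. The plan is to treat the three items in order.

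For (i), I would first apply $\mathcal{F}^{-1}$ to the identity (\ref{W_for}) to obtain the displayed formula. The $L^{\rho}$ membership of $\boldsymbol{W}\varphi$ was already established in Lemma \ref{lemma1}, so continuity is the only new point at the $S$-level. Here I would return to the decomposition (\ref{Ec2}): the first summand is the convolution of an $L^{1}$ function with a locally constant, compactly supported $\varphi$, and such a convolution is itself locally constant (translate the $\varphi$ factor by any vector in a ball of local constancy of $\varphi$); the second summand is a scalar multiple of $\varphi$, hence also locally constant. Therefore $\boldsymbol{W}\varphi$ is locally constant, in particular continuous. For the $L^{2}$ extension, I would take (\ref{Dom_W}) as the definition of the domain and set $\boldsymbol{W}\varphi := -\kappa\mathcal{F}^{-1}(A_{w}\mathcal{F}\varphi)$ there. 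Density is automatic once one observes that $S\subset \text{Dom}(\boldsymbol{W})$: for $\varphi\in S$, $\mathcal{F}\varphi\in S$ is compactly supported and $A_{w}$ is continuous by Lemma \ref{lemma2}, so $A_{w}\mathcal{F}\varphi$ is bounded with compact support. Unboundedness follows from (\ref{fomula_A}) together with the growth hypothesis (iii) on $w$, which force $A_{w}(\|\xi\|_{p})\to\infty$ as $\|\xi\|_{p}\to\infty$.

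For (ii), via the unitary $\mathcal{F}$ the operator $-\boldsymbol{W}$ becomes the maximal multiplication operator by $\kappa A_{w}$, which by Lemma \ref{lemma2} is real-valued, non-negative, and continuous. Self-adjointness and positivity of such a multiplication operator on its natural maximal domain are standard, and both properties transfer back to $-\boldsymbol{W}$ through the unitary equivalence.

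For (iii), I would invoke the standard fact that every non-negative self-adjoint operator generates, via the Borel functional calculus, a bounded holomorphic contraction semigroup of angle $\pi/2$. Concretely, $\mathcal{T}(z)\varphi := \mathcal{F}^{-1}(e^{-z\kappa A_{w}}\mathcal{F}\varphi)$ is well defined for every $z$ with $\Re z\geq 0$; the pointwise bound $|e^{-z\kappa A_{w}(\|\xi\|_p)}|\leq 1$ whenever $\Re z\geq 0$ yields contractivity on the closed right half-plane, strong continuity at $t=0$ follows from dominated convergence on the Fourier side (since $e^{-t\kappa A_{w}}\to 1$ pointwise as $t\to 0^{+}$), and holomorphy in the open right half-plane is immediate from the functional calculus. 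I do not foresee a serious obstacle here; the most delicate point is the continuity assertion in (i), and that is already handled cleanly by the decomposition (\ref{Ec2}) used in Lemma \ref{lemma1}.
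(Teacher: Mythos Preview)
Your proposal is correct and follows essentially the same route as the paper's (very terse) proof: Lemma~\ref{lemma1} for the multiplier representation and $L^{\rho}$ membership, Lemma~\ref{lemma2} for the properties of $A_{w}$ needed in (i) (the paper presumably intends that $A_{w}\widehat{\varphi}\in L^{1}$ since $A_{w}$ is continuous and $\widehat{\varphi}$ has compact support, hence the inverse transform is continuous; your local-constancy argument via the decomposition~(\ref{Ec2}) is an equally valid alternative), unitary equivalence to multiplication by the nonnegative function $\kappa A_{w}$ for (ii), and the spectral/semigroup machinery for (iii).

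One point to flag: your justification of unboundedness does not work as stated. Hypothesis (iii) is a \emph{lower} bound $C_{0}\|y\|_{p}^{\alpha_{1}}\le w(\|y\|_{p})$ for large $\|y\|_{p}$, so via~(\ref{fomula_A}) it bounds the terms of $A_{w}$ from \emph{above} (this is exactly how the upper bound in Lemma~\ref{lemma3} is obtained), not from below; it cannot force $A_{w}(\|\xi\|_{p})\to\infty$. Whether $A_{w}$ is unbounded is governed by the behaviour of $w$ near the origin---one needs $\int_{\|y\|_{p}\le 1}w(\|y\|_{p})^{-1}\,d^{n}y=\infty$---and hypotheses (i)--(iii) alone do not guarantee this (e.g.\ take $w(\|y\|_{p})=\|y\|_{p}^{n/2}$ for $\|y\|_{p}\le 1$). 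The paper's own proof does not address this point either, so you have matched it; but be aware that the word ``unbounded'' in the statement is not actually established under the stated hypotheses.
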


\begin{proof}
(i) It follows from Lemma \ref{lemma1} and the fact that $A_{w}(\left\Vert
\xi\right\Vert _{p})$ is continuous, c.f. Lemma \ref{lemma2}. (ii) follows
from the fact that $\boldsymbol{W}$ is a pseudodifferential operator and that
the Fourier transform preserves the inner product of $L^{2}$. (iii) It follows
of well-known results, see e.g. \cite[Chap. 2, Sect. 3]{E-N} or \cite{C-H}.
For the property of the semigroup of being holomorphic, see e.g. \cite[Chap.
2, Sect. 4.7]{E-N}.
\end{proof}

\subsection{Some additional results}

\begin{lemma}
\label{lemma3}Assume that there exist positive constants $\alpha_{1}%
$,$\,\alpha_{2}$,$\ C_{0}$, $C_{1}$, with $\alpha_{1}>n$, $\alpha_{2}>n$, and
$\alpha_{3}\geq0$, such that
\begin{equation}
C_{0}\left\Vert \xi^{\prime}\right\Vert _{p}^{\alpha_{1}}\leq w(\left\Vert
\xi^{\prime}\right\Vert _{p})\leq C_{1}\left\Vert \xi^{\prime}\right\Vert
_{p}^{\alpha_{2}}e^{\alpha_{3}\left\Vert \xi^{\prime}\right\Vert _{p}}\text{,
for any }\xi^{\prime}\in\mathbb{Q}_{p}^{n}. \label{Ec5}%
\end{equation}
Then there exist positive constants $C_{2}$, $C_{3}$, such that
\[
C_{2}\left\Vert \xi\right\Vert _{p}^{\alpha_{2}-n}e^{-\alpha_{3}p\left\Vert
\xi\right\Vert _{p}^{-1}}\leq A_{w}(\left\Vert \xi\right\Vert _{p})\leq
C_{3}\left\Vert \xi\right\Vert _{p}^{\alpha_{1}-n}%
\]
for any $\xi\in\mathbb{Q}_{p}^{n}$, with the convention that $e^{-\alpha
_{3}p\left\Vert 0\right\Vert _{p}^{-1}}:=\lim_{\left\Vert \xi\right\Vert
_{p}\rightarrow0}e^{-\alpha_{3}p\left\Vert \xi\right\Vert _{p}^{-1}}=0$.
Furthermore, if $\alpha_{3}>0$, then $\alpha_{1}\geq\alpha_{2}$, and if
$\alpha_{3}=0$, then $\alpha_{1}=\alpha_{2}$.
\end{lemma}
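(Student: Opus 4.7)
The plan is to read both bounds directly from the explicit formula
\[
A_w(p^{-\gamma})=(1-p^{-n})\sum_{j=\gamma+2}^{\infty}\frac{p^{nj}}{w(p^{j})}+\frac{p^{n\gamma}}{w(p^{\gamma+1})}
\]
supplied by Lemma \ref{lemma2}. With $\|\xi\|_p=p^{-\gamma}$, the two summands are nonnegative, so each of them may be estimated separately by inserting the assumed bounds on $w$, and then re-expressing powers of $p^{-\gamma}$ in terms of $\|\xi\|_p$.

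For the upper bound I would apply $w(p^{j})\geq C_{0}p^{j\alpha_{1}}$ term-by-term. The tail collapses into a convergent geometric series
\[
\sum_{j=\gamma+2}^{\infty}\frac{p^{nj}}{w(p^{j})}\leq\frac{1}{C_0}\sum_{j=\gamma+2}^{\infty}p^{-j(\alpha_{1}-n)}=\frac{p^{-(\gamma+2)(\alpha_1-n)}}{C_0\left(1-p^{-(\alpha_1-n)}\right)},
\]
which is finite thanks to the assumption $\alpha_1>n$; the isolated term is likewise dominated by $C_0^{-1}p^{-\alpha_1}\cdot p^{-\gamma(\alpha_1-n)}$. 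Both contributions are of order $p^{-\gamma(\alpha_1-n)}=\|\xi\|_p^{\alpha_1-n}$, which delivers the claimed upper estimate.

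For the lower bound I would discard the tail and retain only the isolated term $p^{n\gamma}/w(p^{\gamma+1})$, which is legitimate because every summand is nonnegative. Inserting the upper bound $w(p^{\gamma+1})\leq C_1 p^{(\gamma+1)\alpha_2}e^{\alpha_3 p^{\gamma+1}}$ then yields
\[
A_w(\|\xi\|_p)\geq\frac{p^{n\gamma}}{w(p^{\gamma+1})}\geq\frac{p^{-\alpha_2}}{C_1}\,\|\xi\|_p^{\alpha_2-n}\,e^{-\alpha_3 p\,\|\xi\|_p^{-1}},
\]
using the identities $p^{-\gamma(\alpha_2-n)}=\|\xi\|_p^{\alpha_2-n}$ and $\alpha_3 p^{\gamma+1}=\alpha_3 p\,\|\xi\|_p^{-1}$. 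The point $\xi=0$ is absorbed by the limiting convention stated in the lemma.

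The consistency conditions on the exponents follow from examining $C_0\|\xi'\|_p^{\alpha_1}\leq C_1\|\xi'\|_p^{\alpha_2}e^{\alpha_3\|\xi'\|_p}$ along the two natural directions of $\|\xi'\|_p\in p^{\mathbb{Z}}$. Letting $\|\xi'\|_p\to 0$, the exponential tends to $1$, so $(C_0/C_1)\|\xi'\|_p^{\alpha_1-\alpha_2}\leq 1+o(1)$, which forces $\alpha_1\geq\alpha_2$. If in addition $\alpha_3=0$, then the same inequality in the limit $\|\xi'\|_p\to\infty$ also demands $\alpha_1\leq\alpha_2$, hence $\alpha_1=\alpha_2$. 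No step here presents a real obstacle; the only thing needing careful bookkeeping is the translation between the exponent $\gamma=ord(\xi)$ and the norm $\|\xi\|_p=p^{-\gamma}$ when matching powers on the two sides.
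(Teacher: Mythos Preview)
Your proof is correct and follows essentially the same route as the paper: insert the lower bound on $w$ into the formula from Lemma~\ref{lemma2} and sum the resulting geometric series for the upper estimate, then drop the tail and insert the upper bound on $w$ into the isolated term $p^{n\gamma}/w(p^{\gamma+1})$ for the lower estimate. Your write-up is in fact slightly more complete than the paper's, since you also spell out the elementary limiting argument for the exponent constraints $\alpha_1\ge\alpha_2$ (resp.\ $\alpha_1=\alpha_2$ when $\alpha_3=0$), which the paper states but does not prove.
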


\begin{proof}
By using the lower bound for $w$ given in (\ref{Ec5}), and $\left\Vert
\xi\right\Vert _{p}=p^{-\gamma}$,%

\[
A_{w}(\left\Vert \xi\right\Vert _{p})\leq\frac{(1-p^{-n})}{C_{0}}%
\sum_{j=\gamma+2}^{\infty}\frac{p^{nj}}{p^{j\alpha_{1}}}+\frac{p^{n\gamma}%
}{p^{\alpha_{1}(\gamma+1)}}\leq C_{3}\left\Vert \xi\right\Vert _{p}%
^{\alpha_{1}-n}.
\]

On the other hand, $A_{w}\left(  \left\Vert \xi\right\Vert _{p}\right)
\geq\frac{p^{n\gamma}}{w(p^{\gamma+1})}$ , and by using the upper bound for
$w$ given in (\ref{Ec5}),%
\[
A_{w}\left(  \left\Vert \xi\right\Vert _{p}\right)  \geq\frac{p^{n\gamma}%
}{w(p^{\gamma+1})}\geq\frac{p^{n\gamma}}{C_{1}p^{\alpha_{2}(\gamma
+1)}e^{\alpha_{3}p^{\gamma+1}}}\geq C_{2}\left\Vert \xi\right\Vert
_{p}^{\alpha_{2}-n}e^{-\alpha_{3}p\left\Vert \xi\right\Vert _{p}^{-1}}.
\]

\end{proof}

\begin{definition}
\label{Def_exp_pol_W}We \ say that $\boldsymbol{W}$ (or $A_{w}$) is of
exponential type if inequality (\ref{Ec5}) is only possible for $\alpha_{3}>0$
with $\alpha_{1}$,$\,\alpha_{2}$,$\ C_{0}$, $C_{1}$ positive constants and
$\alpha_{1}>n$, $\alpha_{2}>n$. If (\ref{Ec5}) holds for $\alpha_{3}=0$ with
$\alpha_{1}$,$\,\alpha_{2}$,$\ C_{0}$, $C_{1}$ positive constants and
$\alpha_{1}>n$, $\alpha_{2}>n$, we say that $\boldsymbol{W}$ (or $A_{w}$) is
of polynomial type.
\end{definition}

We note that if $\boldsymbol{W}$ is of polynomial type then $\alpha
_{1}=\,\alpha_{2}>n\ $and $C_{0}$, $C_{1}$ are positive constants with
$C_{1}\geq C_{0}$.

\begin{lemma}
\label{lemma4}With the hypotheses of Lemma \ref{lemma3},
\[
e^{-t\kappa A_{w}(\left\Vert \xi\right\Vert _{p})}\in L^{\rho}(\mathbb{Q}%
_{p}^{n})\text{\ for }1\leq\rho<\infty\text{ and }t>0.
\]

\end{lemma}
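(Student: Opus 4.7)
The plan is to bound $e^{-t\kappa A_{w}(\|\xi\|_{p})}$ using the lower estimate for $A_{w}$ supplied by Lemma~\ref{lemma3}, and to split the integration domain into $B_{0}^{n}$ (where that lower estimate degenerates but the integrand is trivially controlled) and its complement (where super-polynomial growth of $A_{w}$ forces integrability). It suffices to show that $\int_{\mathbb{Q}_{p}^{n}} e^{-\rho t\kappa A_{w}(\|\xi\|_{p})}\, d^{n}\xi < \infty$ for every fixed $\rho\in[1,\infty)$ and every $t>0$.

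First I would use the elementary observation $A_{w}\geq 0$ to get $e^{-\rho t\kappa A_{w}(\|\xi\|_{p})}\leq 1$ pointwise. Since $B_{0}^{n}$ has Haar measure $1$, the contribution of this ball to the integral is at most $1$. This takes care precisely of the region where the factor $e^{-\alpha_{3}p\|\xi\|_{p}^{-1}}$ in Lemma~\ref{lemma3} is problematic, since in the exponential case $\alpha_{3}>0$ it vanishes at the origin.

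For $\|\xi\|_{p}\geq p$ one has $\|\xi\|_{p}^{-1}\leq p^{-1}$, hence $e^{-\alpha_{3}p\|\xi\|_{p}^{-1}}\geq e^{-\alpha_{3}}$ (and equals $1$ in the polynomial case $\alpha_{3}=0$). Lemma~\ref{lemma3} then collapses to a purely polynomial lower bound $A_{w}(\|\xi\|_{p})\geq C\|\xi\|_{p}^{\alpha_{2}-n}$ with $C>0$, giving $e^{-\rho t\kappa A_{w}(\|\xi\|_{p})}\leq e^{-\rho t\kappa C\,\|\xi\|_{p}^{\alpha_{2}-n}}$ on this region. Decomposing the exterior into $p$-adic spheres $S_{j}=\{\|\xi\|_{p}=p^{j}\}$, $j\geq 1$, each of Haar measure $(1-p^{-n})p^{jn}$, reduces the tail integral to
$$
(1-p^{-n})\sum_{j=1}^{\infty} p^{jn}\exp\bigl(-\rho t\kappa C\, p^{j(\alpha_{2}-n)}\bigr).
$$
Since $\alpha_{2}>n$, the exponent $p^{j(\alpha_{2}-n)}$ grows super-exponentially in $j$ and dominates the polynomial weight $p^{jn}$, so convergence follows by a routine comparison (or ratio test).

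There is no real obstacle; the only delicate point worth flagging is that Lemma~\ref{lemma3}'s lower bound degenerates to $0$ near the origin when $\alpha_{3}>0$, but that region is absorbed by the trivial estimate on $B_{0}^{n}$ from the first step, so the splitting $\mathbb{Q}_{p}^{n}=B_{0}^{n}\sqcup(\mathbb{Q}_{p}^{n}\setminus B_{0}^{n})$ sidesteps the difficulty altogether.
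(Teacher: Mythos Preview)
Your proof is correct and follows essentially the same approach as the paper: both split off a compact region near the origin (you use $B_{0}^{n}$ explicitly, the paper invokes continuity on $\|\xi\|_{p}\leq p^{M}$), and on the complement both absorb the factor $e^{-\alpha_{3}p\|\xi\|_{p}^{-1}}$ into a constant to reduce to the integrability of $e^{-c\|\xi\|_{p}^{\alpha_{2}-n}}$. The paper additionally records the quantitative decay rate $C t^{-n/(\alpha_{2}-n)}$ for the tail integral, which you do not state but which is not needed for the lemma as formulated.
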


\begin{proof}
Since $e^{-tA_{w}(\left\Vert \xi\right\Vert _{p})}$ is a continuous function,
it is sufficient to show that there exists $M\in\mathbb{N}$ such that
\[
I_{M}\left(  t\right)  :=\underset{\left\Vert \xi\right\Vert _{p}>p^{M}}{\int
}e^{-\rho\kappa tA_{w}(\left\Vert \xi\right\Vert _{p})}d^{n}\xi<\infty,\text{
for }t>0\text{.}%
\]
Take $M\in\mathbb{N}$, by Lemma \ref{lemma3}, we have
\[
C_{2}\left\Vert \xi\right\Vert _{p}^{\alpha_{2}-n}e^{-\alpha_{3}p\left\Vert
\xi\right\Vert _{p}^{-1}}>C_{2}\left\Vert \xi\right\Vert _{p}^{\alpha_{2}%
-n}e^{-\alpha_{3}p^{-M+1}}\text{ for }\left\Vert \xi\right\Vert _{p}>p^{M},
\]
and (with $B=C_{2}\kappa e^{-\alpha_{3}p^{-M+1}}$),
\[
I_{M}\left(  t\right)  \leq\underset{\left\Vert \xi\right\Vert _{p}>p^{M}%
}{\int}e^{-tB\left\Vert \xi\right\Vert _{p}^{\alpha_{2}-n}}d^{n}\xi\leq
C(M,\kappa,\rho)t^{\frac{-n}{\alpha_{2}-n}},\text{ for }t>0\text{.}%
\]

\end{proof}

\subsection{\label{Complex_systems}$p$-adic description of characteristic
relaxation in complex systems}

In \cite{Av-4} Avetisov et al. developed a new approach to the description of
relaxation processes in complex systems (such as glasses, macromolecules and
proteins) on the basis of $p$-adic analysis. The dynamics of a complex system
is described by a random walk in the space of configurational states, which is
approximated by an ultrametric space ($\mathbb{Q}_{p}$). Mathematically
speaking, the time- evolution of the system is controlled by a master equation
of the form
\begin{equation}
\frac{\partial f\left(  x,t\right)  }{\partial t}=%
{\displaystyle\int\limits_{\mathbb{Q}_{p}}}
\left\{  v\left(  x\mid y\right)  f\left(  y,t\right)  -v\left(  y\mid
x\right)  f\left(  x,t\right)  \right\}  dy\text{, }x\in\mathbb{Q}_{p}\text{,
}t\in\mathbb{R}_{+}, \label{Master_E}%
\end{equation}
where the function $f\left(  x,t\right)  :\mathbb{Q}_{p}\times\mathbb{R}%
_{+}\rightarrow\mathbb{R}_{+}$ is a probability density distribution, and the
function $v\left(  x\mid y\right)  :\mathbb{Q}_{p}\times\mathbb{Q}%
_{p}\rightarrow\mathbb{R}_{+}$ is the probability of transition from state $y$
to the state $x$ per unit time. \ The transition from a state $y$ to a state
$x$ can be perceived as overcoming the energy barrier \ separating these
states. In \cite{Av-4} an Arrhenius type relation was used:%
\[
v\left(  x\mid y\right)  \sim A(T)\exp\left\{  -\frac{U\left(  x\mid y\right)
}{kT}\right\}  ,
\]
where $U\left(  x\mid y\right)  $ is the height of the activation barrier for
the transition from the state $y$ to state $x$, $k$ is the Boltzmann constant
and $T$ is the temperature. This formula establishes a relation between the
structure of \textit{the energy landscape} $U\left(  x\mid y\right)  $ and the
transition function $v\left(  x\mid y\right)  $. The case $v\left(  x\mid
y\right)  =v\left(  y\mid x\right)  $ corresponds to a \textit{degenerate
energy landscape}. In this case the master equation (\ref{Master_E}) takes the
form%
\[
\frac{\partial f\left(  x,t\right)  }{\partial t}=%
{\displaystyle\int\limits_{\mathbb{Q}_{p}}}
v\left(  \left\vert x-y\right\vert _{p}\right)  \left\{  f\left(  y,t\right)
-f\left(  x,t\right)  \right\}  dy\text{,}%
\]
where $v\left(  \left\vert x-y\right\vert _{p}\right)  =\frac{A(T)}{\left\vert
x-y\right\vert _{p}}\exp\left\{  -\frac{U\left(  \left\vert x-y\right\vert
_{p}\right)  }{kT}\right\}  $. By choosing $U$ conveniently, several energy
landscapes can be obtained. Following \cite{Av-4}, there are three basic
landscapes: (i) (logarithmic) $v\left(  \left\vert x-y\right\vert _{p}\right)
=\frac{1}{\left\vert x-y\right\vert _{p}\ln^{\alpha}\left(  1+\left\vert
x-y\right\vert _{p}\right)  }$, $\alpha>1$ (ii) (linear) $v\left(  \left\vert
x-y\right\vert _{p}\right)  =\frac{1}{\left\vert x-y\right\vert _{p}%
^{\alpha+1}}$, $\alpha>0$, (iii) (exponential) $v\left(  \left\vert
x-y\right\vert _{p}\right)  =\frac{e^{-\alpha\left\vert x-y\right\vert _{p}}%
}{\left\vert x-y\right\vert _{p}}$, $\alpha>0$.

Thus, it is natural to study the following Cauchy problem:%

\[
\left\{
\begin{array}
[c]{lll}%
\frac{\partial u\left(  x,t\right)  }{\partial t}=\kappa{\int
\limits_{\mathbb{Q}_{p}^{n}}}\frac{u\left(  x-y,t\right)  -u\left(
x,t\right)  }{w\left(  y\right)  }d^{n}y\text{, } & x\in\mathbb{Q}_{p}^{n}, &
t\in\mathbb{R}_{+},\\
&  & \\
u\left(  x,0\right)  =\varphi\in S\left(  \mathbb{Q}_{p}^{n}\right)  , &  &
\end{array}
\right.
\]
where $w\left(  y\right)  $ is a radial function belonging to a class of
functions that contains functions like:

\noindent(i) $w(\left\Vert y\right\Vert _{p})=\Gamma_{p}^{n}(-\alpha
)\left\Vert y\right\Vert _{p}^{\alpha+n}$, here $\Gamma_{p}^{n}(\cdot)$ is the
$n$-dimensional $p$-adic Gamma function, and $\alpha>0$;

\noindent(ii) $w(\left\Vert y\right\Vert _{p})=\left\Vert y\right\Vert
_{p}^{\beta}e^{\alpha\left\Vert y\right\Vert _{p}}$, $\alpha>0$.

We recall that operator $\boldsymbol{W}$ corresponding to case (i) is the
Taibleson operator which is a generalization of the Vladimirov operator, see
\cite{R-Zu}.

By imposing condition (\ref{Ec5}) to $w$, we include the basic energies
landscapes in our study. Take $w(\left\Vert y\right\Vert _{p})$ satisfying
(\ref{Ec5}) and \ take $f\left(  \left\Vert y\right\Vert _{p}\right)  $ a
continuous function such that%
\[
0<\sup_{y\in\mathbb{Q}_{p}^{n}}f\left(  \left\Vert y\right\Vert _{p}\right)
<\infty\text{ and }0<\inf_{y\in\mathbb{Q}_{p}^{n}}f\left(  \left\Vert
y\right\Vert _{p}\right)  <\infty.
\]
Then $f\left(  \left\Vert y\right\Vert _{p}\right)  w(\left\Vert y\right\Vert
_{p})$ satisfies (\ref{Ec5}).

On the other hand, take $P(\left\Vert y\right\Vert _{p})$ to be a polynomial
in $\left\Vert y\right\Vert _{p}$ with real positive coefficients and nonzero
constant term, thus $\inf_{y\in\mathbb{Q}_{p}^{n}}P\left(  \left\Vert
y\right\Vert _{p}\right)  =P(0)>0$, and take $w(\left\Vert y\right\Vert
_{p})=\left\Vert y\right\Vert _{p}^{\beta}e^{\alpha\left\Vert y\right\Vert
_{p}}$ satisfying (\ref{Ec5}), then $P\left(  \left\Vert y\right\Vert
_{p}\right)  w(\left\Vert y\right\Vert _{p})$ also satisfies (\ref{Ec5}).

Finally we note that $\left\Vert y\right\Vert _{p}^{\beta}\ln^{\alpha
}(1+\left\Vert y\right\Vert _{p})$, $\beta>n$, $\alpha\in\mathbb{N}$, does not
satisfies $\left\Vert y\right\Vert _{p}^{\alpha_{1}}\leq\left\Vert
y\right\Vert _{p}^{\beta}\ln^{\alpha}(1+\left\Vert y\right\Vert _{p})$ for any
$y\in\mathbb{Q}_{p}^{n}$, and hence our results do not include the case of
logarithmic landscapes.

\section{\label{Sect3}Heat Kernels}

In this section we assume that function $w$ satisfies conditions (\ref{Ec5}).

We define
\[
Z(x,t;w,\kappa):=Z(x,t)=\underset{\mathbb{Q}_{p}^{n}}{\int}e^{-\kappa
tA_{w}(\left\Vert \xi\right\Vert _{p})}\Psi(x\cdot\xi)d^{n}\xi\text{ for
}t>0\text{ and }x\in\mathbb{Q}_{p}^{n}\text{.}%
\]
Note that by Lemma \ref{lemma4}, $Z(x,t)=\mathcal{F}_{\xi\rightarrow x}%
^{-1}[e^{-\kappa tA_{w}(\left\Vert \xi\right\Vert _{p})}]\in L^{1}\cap L^{2}$
for $t>0$. We call a such function a \textit{heat kernel}. When considering
$Z(x,t)$ as a function

of $x$ for $t$ fixed we will write $Z_{t}(x)$.

\begin{lemma}
\label{lemma5}There exists a positive constant $C$, such that%
\[
Z(x,t)<Ct\left\Vert x\right\Vert _{p}^{-\alpha_{1}}\text{, for }x\in
\mathbb{Q}_{p}^{n}\smallsetminus\left\{  0\right\}  \text{ and }t>0\text{.}%
\]

\end{lemma}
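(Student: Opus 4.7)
The plan is to exploit the radial structure of $A_w$ and decompose $Z(x,t)$ as a sum over the $p$-adic spheres $S_j := \{\xi \in \mathbb{Q}_p^n : \|\xi\|_p = p^j\}$. Fix $x \neq 0$ and set $m := ord(x)$, so $\|x\|_p = p^{-m}$. First I would establish the character-integral identity
\[
\int_{S_j} \Psi(x \cdot \xi)\, d^n\xi =
\begin{cases}
p^{jn}(1 - p^{-n}) & \text{if } j \leq m,\\
-p^{mn} & \text{if } j = m+1,\\
0 & \text{if } j \geq m+2,
\end{cases}
\]
via the substitution $\xi = p^{-j}u$ with $u \in U$, combined with formula (\ref{formula}) applied to $p^{m-j} x_0$, where $x = p^m x_0$ with $\|x_0\|_p = 1$ (after a sign change $u \mapsto -u$ on $U$).

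Since $e^{-\kappa t A_w(\|\xi\|_p)}$ is constant on each $S_j$ and the integrand defining $Z(x,t)$ is absolutely integrable by Lemma \ref{lemma4}, the integral splits as
\[
Z(x,t) = \sum_{j=-\infty}^{m+1} e^{-\kappa t A_w(p^j)} \int_{S_j} \Psi(x \cdot \xi)\, d^n\xi.
\]
The crucial observation is the vanishing identity
\[
\sum_{j=-\infty}^{m+1} \int_{S_j} \Psi(x \cdot \xi)\, d^n\xi = \int_{B_{m+1}^n} \Psi(x \cdot \xi)\, d^n\xi = p^{(m+1)n}\,\Omega(p^{m+1}\|x\|_p) = 0,
\]
which holds because $\|p^{m+1} x\|_p = p > 1$. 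Subtracting this null sum from the expression for $Z(x,t)$ rewrites it as
\[
Z(x,t) = \sum_{j=-\infty}^{m+1} \bigl(e^{-\kappa t A_w(p^j)} - 1\bigr) \int_{S_j} \Psi(x \cdot \xi)\, d^n\xi.
\]

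From here the estimate is routine: applying $|1 - e^{-u}| \leq u$ for $u \geq 0$ together with the upper bound $A_w(p^j) \leq C_3\, p^{j(\alpha_1 - n)}$ from Lemma \ref{lemma3}, I obtain
\[
|Z(x,t)| \leq \kappa t \sum_{j=-\infty}^{m+1} A_w(p^j)\, p^{jn} \leq C_3\, \kappa t \sum_{j=-\infty}^{m+1} p^{j\alpha_1}.
\]
Since $\alpha_1 > n > 0$, the geometric series converges to a constant multiple of $p^{(m+1)\alpha_1} = p^{\alpha_1}\|x\|_p^{-\alpha_1}$, yielding $Z(x,t) < C\, t\,\|x\|_p^{-\alpha_1}$ with $C$ independent of $x$ and $t$; strictness of the inequality follows from the strict form $1 - e^{-u} < u$ for $u > 0$, which applies because $A_w(p^j) > 0$ for every $j \in \mathbb{Z}$ by Lemma \ref{lemma2}.

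The main obstacle I anticipate is the algebraic bookkeeping around the null-sum identity: one must assemble the two signs in the sphere integrals correctly and spot the vanishing sum that licenses replacing $e^{-\kappa t A_w(p^j)}$ by its increment $e^{-\kappa t A_w(p^j)} - 1$. Once this is in place, the decay exponent $\alpha_1$ from Lemma \ref{lemma3} together with the geometric series immediately supplies the rate $\|x\|_p^{-\alpha_1}$, and the linear factor $t$ comes directly from the bound $1 - e^{-\kappa t A_w(p^j)} \leq \kappa t A_w(p^j)$.
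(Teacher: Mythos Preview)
Your argument is correct. Both proofs start from the same radial Fourier inversion over $p$-adic spheres and finish with the inequality $1-e^{-u}\le u$ together with the upper bound $A_w(\|\xi\|_p)\le C_3\|\xi\|_p^{\alpha_1-n}$ from Lemma~\ref{lemma3}, so the approaches are essentially the same. The only organizational difference is that the paper, after writing
\[
Z(x,t)=\|x\|_p^{-n}\Bigl[(1-p^{-n})\sum_{j\ge 0}e^{-\kappa tA_w(p^{-\beta-j})}p^{-nj}-e^{-\kappa tA_w(p^{-\beta+1})}\Bigr],
\]
bounds every exponential in the positive sum by $1$ so that the bracket collapses to the single expression $1-e^{-\kappa tA_w(p^{-\beta+1})}$; one application of $1-e^{-u}\le u$ and one evaluation of Lemma~\ref{lemma3} then finish the job. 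Your null-sum subtraction achieves the same end by instead producing $e^{-\kappa tA_w(p^j)}-1$ in every term and then summing a geometric series; this is slightly less economical but equally valid, and it has the minor advantage of making the role of the cancellation $\int_{B_{m+1}^n}\Psi(x\cdot\xi)\,d^n\xi=0$ explicit.
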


\begin{proof}
Let $\left\Vert x\right\Vert _{p}=p^{\beta}$. Since $Z(x,t)\in L^{1}%
(\mathbb{Q}_{p}^{n})$ for $t>0$, by applying the formula for the Fourier
transform of radial function, we get
\[
Z(x,t)=\left\Vert x\right\Vert _{p}^{-n}\left[  (1-p^{-n})\overset{\infty
}{\sum_{j=0}}e^{-\kappa A_{w}(p^{-\beta-j})t}p^{-nj}-e^{-\kappa A_{w}%
(p^{-\beta+1})t}\right]  .
\]
By using that $e^{-\kappa A_{w}(p^{-\beta-j})t}\leq1$ for $j\in\mathbb{N}$, we
have
\[
Z(x,t)\leq\left\Vert x\right\Vert _{p}^{-n}\left[  1-e^{-\kappa A_{w}%
(p^{-\beta+1})t}\right]  .
\]

We now apply the Mean Value Theorem to the real function $f\left(  u\right)
=e^{-\kappa A_{w}(p^{-\beta+1})u}$ on $\left[  0,t\right]  $ with $t>0$, and
Lemma \ref{lemma3},
\[
Z(x,t)\leq C_{0}\left\Vert x\right\Vert _{p}^{-n}tA_{w}(p^{-\beta+1})\leq
Ct\left\Vert x\right\Vert _{p}^{-\alpha_{1}}.
\]

\end{proof}

\begin{lemma}
\label{lemma6}\textup{$Z(x,t)\geq0$, }for $x\in\mathbb{Q}_{p}^{n}$ and $t>0$.
\end{lemma}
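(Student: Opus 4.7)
The plan is to use the explicit radial Fourier-series expansion of $Z(x,t)$ derived in the proof of Lemma \ref{lemma5}, and then exploit the monotonicity of $A_w$ established in Lemma \ref{lemma2}(iii).

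First I would dispose of the trivial case $x=0$: by definition
\[
Z(0,t) = \int_{\mathbb{Q}_{p}^{n}} e^{-\kappa t A_{w}(\|\xi\|_{p})}\, d^{n}\xi,
\]
and since the integrand is strictly positive and the integral converges by Lemma \ref{lemma4}, we get $Z(0,t) > 0$.

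For $x\neq 0$, writing $\|x\|_{p}=p^{\beta}$, the proof of Lemma \ref{lemma5} already gives the exact identity
\[
Z(x,t) = \|x\|_{p}^{-n}\!\left[(1-p^{-n})\sum_{j=0}^{\infty} e^{-\kappa A_{w}(p^{-\beta-j})t}\, p^{-nj} \;-\; e^{-\kappa A_{w}(p^{-\beta+1})t}\right].
\]
Since $\|x\|_{p}^{-n}>0$, it suffices to show the bracket is $\geq 0$. Using the geometric sum $(1-p^{-n})\sum_{j=0}^{\infty} p^{-nj}=1$, the bracket rewrites as
\[
(1-p^{-n})\sum_{j=0}^{\infty}\!\left[e^{-\kappa A_{w}(p^{-\beta-j})t} - e^{-\kappa A_{w}(p^{-\beta+1})t}\right] p^{-nj}.
\]
Now by Lemma \ref{lemma2}(iii), $A_{w}(p^{-\gamma})$ is a decreasing function of $\gamma$. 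For every $j\geq 0$ we have $\beta+j \geq \beta > \beta-1$, hence $A_{w}(p^{-\beta-j}) \leq A_{w}(p^{-\beta+1})$, and therefore $e^{-\kappa A_{w}(p^{-\beta-j})t} \geq e^{-\kappa A_{w}(p^{-\beta+1})t}$. Each summand is non-negative, so the whole bracket is non-negative and $Z(x,t)\geq 0$.

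There is no serious obstacle here: everything reduces to a sign bookkeeping against the monotonicity already proved. The only point requiring care is aligning the index $\beta$ with the convention $\|\xi\|_{p}=p^{-\gamma}$ of Lemma \ref{lemma2}, so that the comparison $\beta+j > \beta-1$ translates correctly into the inequality $A_{w}(p^{-\beta-j}) \leq A_{w}(p^{-\beta+1})$.
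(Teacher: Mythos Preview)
Your proof is correct and follows essentially the same idea as the paper's: expand $Z(x,t)$ via the radial Fourier-transform formula and invoke the monotonicity of $A_{w}$ from Lemma~\ref{lemma2}(iii). The paper organizes the sum slightly differently---it telescopes directly into $\sum_{i} p^{ni}\bigl[e^{-\kappa t A_{w}(p^{i})}-e^{-\kappa t A_{w}(p^{i+1})}\bigr]\Omega(\|p^{-i}x\|_{p})$, which is manifestly nonnegative term by term---whereas you start from the Lemma~\ref{lemma5} expansion and redistribute the single negative term using the geometric identity $(1-p^{-n})\sum_{j\ge 0}p^{-nj}=1$; these are equivalent rearrangements of the same computation.
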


\begin{proof}
Since $e^{-tA_{w}(\left\Vert \xi\right\Vert _{p})}$ is integrable for $t>0$
and radial, we have%

\begin{align*}
Z(x,t)  &  =\sum_{i=-\infty}^{\infty}e^{-tA_{w}(p^{i})}\underset{\left\Vert
\xi\right\Vert _{p}=p^{i}}{\int}\Psi(x\cdot\xi)d^{n}\xi\\
&  =\sum_{i=-\infty}^{\infty}p^{ni}\left[  e^{-\kappa tA_{w}(p^{i}%
)}-e^{-\kappa tA_{w}(p^{i+1})}\right]  \Omega(\left\Vert p^{-i}x\right\Vert
_{p})\geq0
\end{align*}
since $A_{w}$ is increasing function of $i$, c.f. Lemma \ref{lemma2}.
\end{proof}

\begin{theorem}
\label{Thm1}The function $Z(x,t)$ has the following properties:

\noindent(i) $Z(x,t)\geq0$ for any $t>0$;

\noindent(ii)$\underset{\mathbb{Q}_{p}^{n}}{\int}Z(x,t)d^{n}x=1$ for any $t>0$;

\noindent(iii) $Z_{t}(x)\in C(\mathbb{Q}_{p}^{n},\mathbb{R})\cap
L^{1}(\mathbb{Q}_{p}^{n})\cap L^{2}(\mathbb{Q}_{p}^{n})$ for any $t>0$;

\noindent(iv) $Z_{t}(x)\ast Z_{t^{\prime}}(x)=Z_{t+t^{\prime}}(x)$ for any
$t$, $t^{\prime}>0$;

\noindent(v) $\underset{t\rightarrow0^{+}}{\lim}Z(x,t)=\delta(x)$ in
$S^{\prime}(\mathbb{Q}_{p}^{n})$.
\end{theorem}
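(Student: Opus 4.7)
The plan is to dispatch the five items in the order given, using mainly Fourier inversion together with the ingredients already assembled (Lemmas \ref{lemma2}--\ref{lemma6}).

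Item (i) is literally the content of Lemma \ref{lemma6}, so nothing remains. For (iii) I would first observe that, by Lemma \ref{lemma4}, $e^{-\kappa t A_{w}(\|\xi\|_p)} \in L^{1}(\mathbb{Q}_{p}^{n}) \cap L^{2}(\mathbb{Q}_{p}^{n})$ for $t>0$. The inverse Fourier transform of an $L^{1}$ function on $\mathbb{Q}_{p}^{n}$ is continuous (a standard Riemann-Lebesgue-type statement for $p$-adic Fourier analysis), so $Z_{t} \in C(\mathbb{Q}_{p}^{n},\mathbb{R})$; and Plancherel applied to $e^{-\kappa t A_{w}} \in L^{2}$ gives $Z_{t}\in L^{2}$. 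The main technical point, and the one I expect to be the chief obstacle, is showing $Z_{t}\in L^{1}$: I would split the integral at $\|x\|_p = 1$, bound $Z_{t}$ on the inner region by its $L^{\infty}$-norm (which is $\le \|e^{-\kappa t A_w}\|_{L^1}$), and on the outer region use Lemma \ref{lemma5} together with $\alpha_{1}>n$ to obtain integrability:
\[
\int_{\|x\|_p>1} Z(x,t)\,d^{n}x \;\le\; Ct\int_{\|x\|_p>1}\|x\|_{p}^{-\alpha_{1}}\,d^{n}x \;<\;\infty.
\]

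With $Z_{t}\in L^{1}$ in hand, (ii) is a one-line Fourier-inversion argument: $\int Z(x,t)\,d^{n}x = \widehat{Z_{t}}(0) = e^{-\kappa t A_{w}(0)} = 1$, since $A_{w}(0)=0$ by Lemma \ref{lemma2}. For (iv), since both $Z_{t}$ and $Z_{t'}$ are in $L^{1}$, their convolution is in $L^{1}$ and the usual identity $\mathcal{F}(Z_{t}\ast Z_{t'}) = \widehat{Z_{t}}\,\widehat{Z_{t'}}$ applies, giving
\[
\mathcal{F}(Z_{t}\ast Z_{t'})(\xi) = e^{-\kappa t A_{w}(\|\xi\|_p)}\,e^{-\kappa t' A_{w}(\|\xi\|_p)} = e^{-\kappa(t+t')A_{w}(\|\xi\|_p)} = \widehat{Z_{t+t'}}(\xi),
\]
and injectivity of the Fourier transform yields the semigroup identity.

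Finally, for (v), let $\varphi \in S(\mathbb{Q}_{p}^{n})$. Using Parseval and the fact that $\widehat{Z_{t}}(\xi) = e^{-\kappa t A_{w}(\|\xi\|_{p})}$,
\[
\langle Z_{t},\varphi\rangle \;=\; \int_{\mathbb{Q}_{p}^{n}} e^{-\kappa t A_{w}(\|\xi\|_{p})}\,\widehat{\varphi}(-\xi)\,d^{n}\xi.
\]
Since $\widehat{\varphi} \in S \subset L^{1}$ and $0 \le e^{-\kappa t A_{w}}\le 1$, dominated convergence lets me pass $t\to 0^{+}$ inside the integral to obtain $\int \widehat{\varphi}(-\xi)\,d^{n}\xi = \varphi(0) = \langle \delta,\varphi\rangle$, as required. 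The only nontrivial input in the whole argument is the $L^{1}$-integrability of $Z_{t}$; once that is secured, all the remaining parts are routine consequences of Fourier inversion and the multiplicativity of $e^{-\kappa t A_{w}}$ in $t$.
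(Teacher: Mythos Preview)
Your proposal is correct and for items (i), (iv) and (v) it is essentially identical to the paper's argument. The one substantive difference is how you secure $Z_{t}\in L^{1}$ in (iii): you split at $\|x\|_{p}=1$ and invoke the pointwise decay from Lemma~\ref{lemma5} on the exterior, whereas the paper never calls on Lemma~\ref{lemma5} here. Instead, in its proof of (ii) the paper notes that $e^{-\kappa t A_{w}}$ is continuous at $0$ and lies in $L^{1}\cap L^{2}$, and passes directly to ``then $Z_{t}\in L^{1}\cap L^{2}$''; the implicit mechanism is the standard fact that if $f\in L^{1}$ is continuous at the origin and $\widehat f\ge 0$ (here $Z_{t}\ge 0$ by Lemma~\ref{lemma6}), then $\widehat f\in L^{1}$ with $\int\widehat f=f(0)$. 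Your route is more explicit and self-contained; the paper's is shorter but leaves that Fourier-analytic step to the reader. A second minor difference is that you get $Z_{t}\in L^{2}$ via Plancherel, while the paper deduces it in (iii) from (i)--(ii) together with boundedness (i.e.\ $L^{1}\cap L^{\infty}\subset L^{2}$). Either way, once $Z_{t}\in L^{1}$ is in hand your derivations of (ii), (iv) and (v) match the paper's line for line.
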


\begin{proof}
(i) It follows from Lemma \ref{lemma6}. (ii) For any $t>0$ the function
$e^{-\kappa tA_{w}(\left\Vert \xi\right\Vert _{p})}$ is continuous at $\xi=0$
and by Lemma \ref{lemma4} we have $e^{-\kappa tA_{w}(\left\Vert \xi\right\Vert
_{p})}\in L^{1}\cap L^{2}$ for $t>0$, then $Z_{t}(x)\in L^{1}\cap L^{2}$ for
$t>0$. Now the statement follows from the inversion formula for the Fourier
transform. (iii) From Lemma \ref{lemma4}, with $\rho=1,2$, we have
$Z_{t}(x)\in C(\mathbb{Q}_{p}^{n},\mathbb{R})\cap L^{1}(\mathbb{Q}_{p}^{n})$,
$t>0$, and by (i) and (ii), $Z_{t}(x)\in L^{2}(\mathbb{Q}_{p}^{n})$. (iv) By
the previous property $Z_{t}(x)\in L^{1}$ for any $t>0$, then%
\begin{align*}
Z_{t}(x)\ast Z_{t^{\prime}}(x)  &  =\mathcal{F}_{\xi\rightarrow x}^{-1}\left(
e^{-\kappa tA_{w}(\left\Vert \xi\right\Vert _{p})}e^{-\kappa t^{\prime}%
A_{w}(\left\Vert \xi\right\Vert _{p})}\right)  =\mathcal{F}_{\xi\rightarrow
x}^{-1}\left(  e^{-\kappa\left(  t+t^{\prime}\right)  A_{w}(\left\Vert
\xi\right\Vert _{p})}\right) \\
&  =Z_{t+t^{\prime}}(x).
\end{align*}

(v) Since we have $e^{-\kappa tA_{w}(\left\Vert \xi\right\Vert _{p})}\in
C(\mathbb{Q}_{p}^{n},\mathbb{R})\cap L^{1}$ for $t>0$, c.f. Lemma
\ref{lemma4}, the inner product%
\[
\left\langle e^{-\kappa tA_{w}(\left\Vert \xi\right\Vert _{p})},\phi
\right\rangle =%
{\displaystyle\int\limits_{\mathbb{Q}_{p}^{n}}}
e^{-\kappa tA_{w}(\left\Vert \xi\right\Vert _{p})}\overline{\phi\left(
\xi\right)  }d^{n}\xi
\]
defines a distribution on $\mathbb{Q}_{p}^{n}$, then, by the Dominated
Converge Theorem,
\[
\lim_{t\rightarrow0^{+}}\left\langle e^{-\kappa tA_{w}(\left\Vert
\xi\right\Vert _{p})},\phi\right\rangle =\left\langle 1,\phi\right\rangle
\]
and thus
\[
\lim_{t\rightarrow0^{+}}\left\langle Z\left(  x,t\right)  ,\phi\right\rangle
=\lim_{t\rightarrow0^{+}}\left\langle e^{-\kappa tA_{w}(\left\Vert
\xi\right\Vert _{p})},\mathcal{F}^{-1}\phi\right\rangle =\left\langle
1,\mathcal{F}^{-1}\phi\right\rangle =\left(  \delta,\phi\right)  .
\]

\end{proof}

\section{\label{Sect4}Markov Processes over $\mathbb{Q}_{p}^{n}$}

Along this section we consider $\left(  \mathbb{Q}_{p}^{n},\left\Vert
\cdot\right\Vert _{p}\right)  $ as complete non-Archimedean metric space and
use the terminology and results of \cite[Chapters Two, Three]{Dyn}. Let
$\mathcal{B}$ denote the Borel $\sigma-$algebra of $\mathbb{Q}_{p}^{n}$. Thus
$\left(  \mathbb{Q}_{p}^{n},\mathcal{B},d^{n}x\right)  $ is a measure space.

We set%
\[
p(t,x,y):=Z(x-y,t)\ \text{for }t>0\text{,}\;x,y\in\mathbb{Q}_{p}^{n},
\]
and
\[
P(t,x,B)=%
\begin{cases}
\int_{B}p(t,y,x)d^{n}y & \text{for }t>0,\quad x\in\mathbb{Q}_{p}^{n},\quad
B\in\mathcal{B}\\
\boldsymbol{1}_{B}(x) & \text{for }t=0.
\end{cases}
\]

\begin{lemma}
\label{lemma7}With the above notation the following assertions hold:

(i) $p(t,x,y)$ is a normal transition density;

(ii) $P(t,x,B)$ is a normal transition function.
\end{lemma}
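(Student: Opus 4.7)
The plan is to verify, in order, the standard axioms for a normal transition density (and then normal transition function) as given in \cite[Chapter 2]{Dyn}, reading off each property from the corresponding item of Theorem \ref{Thm1}. Essentially every statement we need has already been packaged into that theorem; the work is bookkeeping plus using translation invariance of the Haar measure $d^{n}x$ on $\mathbb{Q}_{p}^{n}$.

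For part (i), I would check the four required properties of $p(t,x,y)=Z(x-y,t)$. Non-negativity is immediate from Theorem \ref{Thm1}(i). For measurability in $(x,y)$, note that $Z_{t}$ is continuous on $\mathbb{Q}_{p}^{n}$ by Theorem \ref{Thm1}(iii), so $(x,y)\mapsto Z(x-y,t)$ is continuous, hence Borel. For normalization,
\[
\int_{\mathbb{Q}_{p}^{n}} p(t,x,y)\,d^{n}y=\int_{\mathbb{Q}_{p}^{n}} Z(x-y,t)\,d^{n}y=\int_{\mathbb{Q}_{p}^{n}} Z(z,t)\,d^{n}z=1
\]
by translation invariance of $d^{n}y$ and Theorem \ref{Thm1}(ii). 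The Chapman--Kolmogorov identity
\[
\int_{\mathbb{Q}_{p}^{n}} p(t,x,z)\,p(s,z,y)\,d^{n}z=p(t+s,x,y)
\]
is exactly the convolution statement $Z_{t}\ast Z_{s}=Z_{t+s}$ of Theorem \ref{Thm1}(iv), after the change of variable $z\mapsto x-z$. This shows $p$ is a normal transition density.

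For part (ii), I would deduce the axioms for $P(t,x,B)$ directly from those of $p$ together with the $t=0$ definition $P(0,x,B)=\mathbf{1}_{B}(x)$. For $t>0$, $B\mapsto P(t,x,B)$ is a measure because it is defined by integration of the non-negative kernel $p(t,x,\cdot)$ against $d^{n}y$, and $P(t,x,\mathbb{Q}_{p}^{n})=1$ by the normalization just verified, so $P(t,x,\cdot)$ is a probability measure. Borel measurability of $x\mapsto P(t,x,B)$ follows, for $t>0$, from continuity of $Z_{t}$ and Fubini (so that the map is even continuous when $B$ is bounded, and a monotone class argument extends to all $B\in\mathcal{B}$); for $t=0$ it is obvious since $\mathbf{1}_{B}$ is Borel. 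The Chapman--Kolmogorov equation for $P$,
\[
P(t+s,x,B)=\int_{\mathbb{Q}_{p}^{n}} P(s,y,B)\,P(t,x,dy),
\]
reduces, for $t,s>0$, to the one already established for $p$ by Fubini; the cases $t=0$ or $s=0$ are trivial from the definition.

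The only step that requires any genuine thought is the measurability in $x$ of $P(t,x,B)$ for a general Borel set $B$, and this is routine: the class of $B$ for which measurability holds is a Dynkin system containing the open balls (where continuity of $Z_{t}$ makes the integral even continuous in $x$), so by the $\pi$--$\lambda$ theorem it coincides with $\mathcal{B}$. No property is needed beyond those already collected in Theorem \ref{Thm1}, so the proof will be quite short.
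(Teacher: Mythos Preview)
Your proposal is correct and follows exactly the same approach as the paper: the paper's proof is a single sentence stating that the result follows from Theorem~\ref{Thm1} together with \cite[Section~2.1]{Dyn}. You have simply unpacked that citation, explicitly matching each axiom of a normal transition density/function to the corresponding item of Theorem~\ref{Thm1}; there is no additional idea in the paper beyond what you have written.
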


\begin{proof}
The result follows from Theorem \ref{Thm1}, see \cite[Section 2.1]{Dyn} for
further details.
\end{proof}

\begin{lemma}
\label{lemma8}The transition function $P(t,x,B)$ satisfies the following two conditions:

\noindent(i) for each $u\geq0$ and compact $B$%
\[
\underset{x\rightarrow\infty}{\lim}\underset{t\leq u}{\text{ }\sup
}P(t,x,B)=0\ \text{[Condition L(B)];}%
\]

\noindent(ii) for each $\epsilon>0$ and compact $B$%
\[
\underset{t\rightarrow0^{+}}{\lim}\underset{x\in B}{\sup}P(t,x,\mathbb{Q}%
_{p}^{n}\setminus B_{\epsilon}^{n}(x))=0\ \text{[Condition M(B)].}%
\]

\end{lemma}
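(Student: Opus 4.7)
The plan is to verify each of the two conditions by direct computation using the explicit form $P(t,x,B)=\int_{B}Z(x-y,t)\,d^{n}y$ (using the fact that $Z_{t}$ is radial, hence $Z(x-y,t)=Z(y-x,t)$), together with the estimates from Lemma~\ref{lemma5} and the four properties of $Z$ established in Theorem~\ref{Thm1}.

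\textbf{Condition L(B).} I would fix a compact set $B$; since $B$ is bounded, there exists $N\in\mathbb{Z}$ with $B\subseteq B_{N}^{n}$. For any $x\in\mathbb{Q}_{p}^{n}$ with $\|x\|_{p}>p^{N}$ and any $y\in B$, the ultrametric property gives $\|x-y\|_{p}=\|x\|_{p}$. Applying Lemma~\ref{lemma5} uniformly in $y\in B$ then yields
\[
Z(x-y,t)\leq Ct\,\|x-y\|_{p}^{-\alpha_{1}}=Ct\,\|x\|_{p}^{-\alpha_{1}},
\]
so that for every $u\geq 0$,
\[
\sup_{t\leq u}P(t,x,B)\leq Cu\,\mathrm{vol}(B)\,\|x\|_{p}^{-\alpha_{1}},
\]
which tends to zero as $\|x\|_{p}\to\infty$.

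\textbf{Condition M(B).} The key observation is that the change of variables $z=y-x$ gives
\[
P\bigl(t,x,\mathbb{Q}_{p}^{n}\setminus B_{\epsilon}^{n}(x)\bigr)=\int_{\|z\|_{p}>\epsilon}Z(z,t)\,d^{n}z,
\]
independently of $x$; thus the $\sup_{x\in B}$ is superfluous and the quantity to bound does not depend on the compact set $B$ at all. By Theorem~\ref{Thm1}(ii) this equals $1-\int_{\|z\|_{p}\leq\epsilon}Z(z,t)\,d^{n}z$. Choose $k\in\mathbb{Z}$ with $p^{k}\leq\epsilon$, so that $\{\|z\|_{p}\leq p^{k}\}\subseteq\{\|z\|_{p}\leq\epsilon\}$. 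The function $\phi(z):=\Omega(p^{-k}\|z\|_{p})$ is a Bruhat--Schwartz test function with $\phi(0)=1$, and by Theorem~\ref{Thm1}(v),
\[
\lim_{t\to 0^{+}}\int_{\mathbb{Q}_{p}^{n}}\phi(z)Z(z,t)\,d^{n}z=\phi(0)=1.
\]
Consequently $\int_{\|z\|_{p}>\epsilon}Z(z,t)\,d^{n}z\leq 1-\int_{\|z\|_{p}\leq p^{k}}Z(z,t)\,d^{n}z\to 0$ as $t\to 0^{+}$, proving M(B).

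I do not expect a serious obstacle: all the heavy analytical content is already contained in Lemma~\ref{lemma5} (giving the pointwise decay of $Z$ needed for L(B)) and in Theorem~\ref{Thm1} (supplying normalization, positivity, and the initial delta behavior needed for M(B)). The only minor technical point is that $\epsilon$ in the definition of $B_{\epsilon}^{n}(x)$ need not be an integer power of $p$, which is immediately handled by passing to a $p$-adic radius $p^{k}\leq\epsilon$.
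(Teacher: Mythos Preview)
Your proof is correct. Part (i) is essentially identical to the paper's argument: both apply Lemma~\ref{lemma5} together with the ultrametric identity $\|x-y\|_{p}=\|x\|_{p}$ for $\|x\|_{p}$ large.

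For part (ii) you take a genuinely different route. The paper reuses Lemma~\ref{lemma5} a second time: from $Z(z,t)\leq Ct\,\|z\|_{p}^{-\alpha_{1}}$ and $\alpha_{1}>n$ one gets directly
\[
P\bigl(t,x,\mathbb{Q}_{p}^{n}\setminus B_{\epsilon}^{n}(x)\bigr)\leq Ct\int_{\|z\|_{p}>\epsilon}\|z\|_{p}^{-\alpha_{1}}\,d^{n}z=C'(\alpha_{1},\epsilon,n)\,t,
\]
which vanishes as $t\to 0^{+}$. Your argument instead combines normalization (Theorem~\ref{Thm1}(ii)) with the weak convergence $Z_{t}\to\delta$ in $S'$ (Theorem~\ref{Thm1}(v)), testing against the indicator of a ball. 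The paper's approach is shorter and yields the quantitative rate $O(t)$; yours is softer but shows that condition M(B) really only requires the qualitative initial behavior of the heat kernel, not the pointwise decay estimate of Lemma~\ref{lemma5}.
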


\begin{proof}
(i) By Lemma \ref{lemma5} and the fact that $\left\Vert \cdot\right\Vert _{p}$
is an ultranorm, we have
\[
P(t,x,B)\leq Ct\underset{B}{\int}\left\Vert x-y\right\Vert _{p}^{-\alpha_{1}%
}d^{n}y=tC\left\Vert x\right\Vert _{p}^{-\alpha_{1}}vol\left(  B\right)
\text{ for }x\in\mathbb{Q}_{p}^{n}\setminus B.
\]

Therefore $\underset{x\rightarrow\infty}{\lim\text{ }}\underset{t\leq u}{\sup
}P(t,x,B)=0$.

(ii) Again, by Lemma \ref{lemma5}, the fact that $\left\Vert \cdot\right\Vert
_{p}$ is an ultranorm, and $\alpha_{1}>n$, we have%

\begin{align*}
P(t,x,\mathbb{Q}_{p}^{n}\setminus B_{\epsilon}^{n}(x))  &  \leq Ct\underset
{\left\Vert x-y\right\Vert _{p}>\epsilon}{\int}\left\Vert x-y\right\Vert
_{p}^{-\alpha_{1}}d^{n}y=Ct\underset{\left\Vert z\right\Vert _{p}>\epsilon
}{\int}\left\Vert z\right\Vert _{p}^{-\alpha_{1}}d^{n}z\\
&  =C^{\prime}\left(  \alpha_{1},\epsilon,n\right)  t.
\end{align*}

Therefore
\[
\underset{t\rightarrow0^{+}}{\lim}\underset{x\in B}{\sup}P(t,x,\mathbb{Q}%
_{p}^{n}\setminus B_{\epsilon}^{n}(x))\leq\underset{t\rightarrow0^{+}}{\lim
}\underset{x\in B}{\sup}C^{\prime}\left(  \alpha_{1},\epsilon,n\right)  t=0.
\]

\end{proof}

\begin{theorem}
\label{Thm2}$Z(x,t)$ is the transition density of a time and space homogeneous
Markov process which is bounded, right-continuous and has no discontinuities
other than jumps.
\end{theorem}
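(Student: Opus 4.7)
The plan is to assemble the preceding lemmas into a direct invocation of Dynkin's general construction of standard Markov processes. First I would observe that Lemma \ref{lemma7} supplies everything needed to apply the basic existence theorem in \cite[Chap.~2]{Dyn}: on the complete non-Archimedean Polish space $(\mathbb{Q}_p^n,\|\cdot\|_p)$ we have a normal transition function $P(t,x,B)$ with density $p(t,x,y)=Z(x-y,t)$. Because this density depends on $(t,x,y)$ only through $(t,x-y)$, the transition function is invariant under translations of $\mathbb{Q}_p^n$ and under time shifts; consequently the Markov process produced by Dynkin's construction is automatically both time- and space-homogeneous. This takes care of the homogeneity assertions in the theorem.

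The remaining assertions --- bounded sample paths on finite time intervals, right-continuity, and the absence of discontinuities other than jumps --- are exactly the regularity conclusion in Dynkin's theory of standard processes (see \cite[Chap.~3]{Dyn}). The sufficient conditions for this upgrade are precisely Condition L(B) (which prevents the process from escaping to infinity in finite time, hence yields boundedness of paths) and Condition M(B) (which forces right-continuity at each $t$ and restricts discontinuities to jumps). Both conditions were established in Lemma \ref{lemma8}. So the proof is essentially a citation: given Lemmas \ref{lemma7} and \ref{lemma8}, apply the regularity theorem in \cite[Chap.~3]{Dyn} to conclude the existence of a modification with the claimed path properties.

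The real work has already been done upstream, and the only step that deserves emphasis here is verifying that what we have genuinely fits into Dynkin's framework --- namely that $(\mathbb{Q}_p^n,\|\cdot\|_p)$ is a locally compact, complete separable metric space with the required $\sigma$-algebra structure (a standard fact), so the abstract theorems apply. The main obstacle lying behind Theorem \ref{Thm2} was producing the decay estimate $Z(x,t)\leq Ct\|x\|_p^{-\alpha_1}$ with $\alpha_1>n$ of Lemma \ref{lemma5}, which is what powered both L(B) and M(B); at the level of this theorem itself there is no further technical difficulty, merely a careful appeal to the existing machinery.
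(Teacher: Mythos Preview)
Your proposal is correct and follows essentially the same route as the paper: the paper's proof simply invokes \cite[Theorem 3.6]{Dyn}, noting that $(\mathbb{Q}_p^n,\|\cdot\|_p)$ is a semi-compact space (locally compact Hausdorff with countable base) and that $P(t,x,B)$ is a normal transition function satisfying conditions L(B) and M(B) by Lemmas \ref{lemma7} and \ref{lemma8}. Your write-up is a bit more expansive---spelling out why homogeneity follows from the form $p(t,x,y)=Z(x-y,t)$ and what each condition buys---but the logical structure is identical.
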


\begin{proof}
The result follows from \cite[Theorem 3.6]{Dyn} by using that $(\mathbb{Q}%
_{p}^{n},\left\Vert x\right\Vert _{p})$ is semi-compact space, i.e. a locally
compact Hausdorff space with a countable base, and $P(t,x,B)$ is a normal
transition function satisfying conditions $L(B)$ and $M(B)$, c.f. Lemmas
\ref{lemma7}, \ref{lemma8}.
\end{proof}

\section{\label{Sect5}The Cauchy Problem}

Consider the following Cauchy problem:%
\begin{equation}
\left\{
\begin{array}
[c]{ll}%
\frac{\partial u}{\partial t}(x,t)-\boldsymbol{W}u(x,t)=0, & x\in
\mathbb{Q}_{p}^{n},t\in\left[  0,\infty\right)  ,\\
& \\
u\left(  x,0\right)  =u_{0}(x), & u_{0}(x)\in Dom(\boldsymbol{W}),
\end{array}
\right.  \label{Cauchy1}%
\end{equation}
where $\left(  \boldsymbol{W}\phi\right)  \left(  x\right)  =-\kappa
\mathcal{F}_{\xi\rightarrow x}^{-1}\left(  A_{w}\left(  \left\Vert
w\right\Vert _{p}\right)  \mathcal{F}_{x\rightarrow\xi}\phi\right)  $ for
$\phi\in Dom(\boldsymbol{W})$, see (\ref{Dom_W}), \ and $u:$ $\mathbb{Q}%
_{p}^{n}\times\left[  0,\infty\right)  \rightarrow\mathbb{C}$ is an unknown
function. We say that a function $u(x,t)$ is a\textit{ solution} of
(\ref{Cauchy1}), if $u(x,t)\in C\left(  \left[  0,\infty\right)
,Dom(\boldsymbol{W})\right)  \cap C^{1}\left(  \left[  0,\infty\right)
,L^{2}(\mathbb{Q}_{p}^{n})\right)  $ and $u$ satisfies (\ref{Cauchy1}) for all
$t\geq0$.

In this section, we understand the notions of continuity in $t$,
differentiability in $t$ and equalities in the \ $L^{2}(\mathbb{Q}_{p}^{n})$
sense, as it is customary in the semigroup theory.

We know from Proposition \ref{prop1} that the operator $\boldsymbol{W}$
generates a $C_{0}$ semigroup $\left(  \mathcal{T}(t)\right)  _{t\geq0}$, then
Cauchy problem (\ref{Cauchy1}) is well-posed, i.e. it is uniquely solvable
with the solution continuously dependent on the initial data, and its solution
is given by $u(x,t)=\mathcal{T}(t)u_{0}(x)$, for $t\geq0$, see e.g.
\cite[Theorem 3.1.1]{C-H}. However the general theory does not give an
explicit formula for the semigroup $\left(  \mathcal{T}(t)\right)  _{t\geq0}$.
We show that the operator $\mathcal{T}(t)$ for $t>0$ coincides with the
operator of convolution with the heat kernel $Z_{t}\ast\cdot$. In order to
prove this, we first construct a solution of Cauchy problem (\ref{Cauchy1})
with the initial value from $S$ without using the semigroup theory. Then we
extend the result to all initial values from $Dom(\boldsymbol{W})$, see
Propositions \ref{prop2}-\ref{prop3}.

\subsection{Homogeneous equations with initial values in $S$}

To simplify the notation, set $Z_{0}\ast u_{0}=\left(  Z_{t}(x)\ast
u_{0}(x)\right)  \mid_{t=0}:=u_{0}$. We define the function
\begin{equation}
u\left(  x,t\right)  =Z_{t}(x)\ast u_{0}(x),\text{ for }t\geq0.
\label{funct_u}%
\end{equation}
Since $Z_{t}(x)\in L^{1}$ for $t>0$ and $u_{0}\in S(\mathbb{Q}_{p}^{n})\subset
L^{\infty}(\mathbb{Q}_{p}^{n})$, the convolution exists and is a continuous
function, see e.g. \cite[Theorem 1.1.6]{Rudin}.

\begin{lemma}
\label{lemma9}Take $u_{0}\in S$ with the support of $\ \widehat{u_{0}}$
contained in $B_{R}$, and $u\left(  x,t\right)  $, $t\geq0$ defined as in
(\ref{funct_u}). Then the following assertions hold:

\noindent(i) $u\left(  x,t\right)  $ is continuously differentiable in time
for $t\geq0$ and the derivative is given by%

\[
\frac{\partial u(x,t)}{\partial t}=-\kappa\mathcal{F}_{\xi\rightarrow x}%
^{-1}\left(  e^{-\kappa tA_{w}(\left\Vert \xi\right\Vert _{p})}A_{w}%
(\left\Vert \xi\right\Vert _{p})1_{B_{R}}(\xi)\right)  \ast u_{0}(x);
\]

\noindent(ii) $u(x,t)\in Dom(\boldsymbol{W})$ for any $t\geq0$ and%
\[
(\boldsymbol{W}u)(x,t)=-\kappa\mathcal{F}_{\xi\rightarrow x}^{-1}\left(
e^{-\kappa tA_{w}(\left\Vert \xi\right\Vert _{p})}A_{w}(\left\Vert
\xi\right\Vert _{p})1_{B_{R}}(\xi)\right)  \ast u_{0}(x).
\]

\end{lemma}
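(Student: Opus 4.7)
The plan is to pass to the Fourier side, where the heat-kernel convolution becomes pointwise multiplication by $e^{-\kappa tA_{w}(\Vert\xi\Vert_{p})}$ and the compact support of $\widehat{u_{0}}$ collapses every integrand onto $B_{R}$. Since $Z_{t}\in L^{1}$ by Theorem \ref{Thm1}(iii) and $u_{0}\in S$, the convolution theorem gives
\begin{equation*}
\widehat{u}(\xi,t)=e^{-\kappa tA_{w}(\Vert\xi\Vert_{p})}\widehat{u_{0}}(\xi)=e^{-\kappa tA_{w}(\Vert\xi\Vert_{p})}1_{B_{R}}(\xi)\widehat{u_{0}}(\xi),
\end{equation*}
and by the continuity of $A_{w}$ (Lemma \ref{lemma2}) there is a constant $M$ with $A_{w}(\Vert\xi\Vert_{p})\le M$ on $B_{R}$. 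This boundedness of $A_{w}$ on the support of $\widehat{u_{0}}$ is the crucial simplification.

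For part (i), I would differentiate $\widehat{u}(\xi,t)$ formally in $t$ to obtain $-\kappa A_{w}(\Vert\xi\Vert_{p})e^{-\kappa tA_{w}(\Vert\xi\Vert_{p})}1_{B_{R}}(\xi)\widehat{u_{0}}(\xi)$, and justify interchanging $\partial_{t}$ with $\mathcal{F}^{-1}$ via dominated convergence: the difference quotients are dominated, uniformly for $t$ and $h$ in any bounded set, by $\kappa M|\widehat{u_{0}}(\xi)|1_{B_{R}}(\xi)$, which lies in $L^{1}\cap L^{2}$. This delivers differentiability at each $t\ge 0$ (with a one-sided derivative at $t=0$) and continuity of $\partial_{t}u$ in $t$, both in the $L^{2}$ sense required by the paper's conventions. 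Since $e^{-\kappa tA_{w}}A_{w}1_{B_{R}}$ is bounded with compact support in $B_{R}$ (hence in $L^{1}$), its inverse Fourier transform is well defined and bounded, and the convolution theorem rewrites $\mathcal{F}^{-1}(e^{-\kappa tA_{w}}A_{w}1_{B_{R}}\widehat{u_{0}})$ as $\mathcal{F}^{-1}(e^{-\kappa tA_{w}}A_{w}1_{B_{R}})\ast u_{0}$, yielding the formula in (i).

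For part (ii), the membership $u(\cdot,t)\in Dom(\boldsymbol{W})$ reduces by \eqref{Dom_W} to showing $A_{w}(\Vert\xi\Vert_{p})\widehat{u}(\xi,t)\in L^{2}$, which follows from the same bound $M|\widehat{u_{0}}(\xi)|1_{B_{R}}(\xi)\in L^{2}$. Proposition \ref{prop1}(i) then identifies $\boldsymbol{W}u(x,t)$ with $-\kappa\mathcal{F}^{-1}(A_{w}e^{-\kappa tA_{w}}1_{B_{R}}\widehat{u_{0}})$, and the same convolution-theorem step as in (i) yields the stated expression. Notably, the formulas in (i) and (ii) coincide, which is precisely the statement that $u$ satisfies the heat equation, and this is what the next step in the paper will exploit. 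The main obstacle is the dominated-convergence justification for differentiation in $t$, especially ensuring a $t$-independent majorant valid in a neighborhood of $t=0$; everything ultimately rests on the single observation that $A_{w}\cdot 1_{B_{R}}$ is bounded, which would fail without the compact-support hypothesis on $\widehat{u_{0}}$.
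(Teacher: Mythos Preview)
Your proposal is correct and follows essentially the same approach as the paper: pass to the Fourier side, exploit the compact support of $\widehat{u_{0}}$ in $B_{R}$ so that $A_{w}$ is bounded there, and use this to control everything. For part (i) the paper simply cites Lemma~7.1 of \cite{T-Z} rather than writing out the dominated-convergence argument you sketch, and for part (ii) the paper's computation matches yours line by line (the paper phrases the domain check as $A_{w}e^{-\kappa tA_{w}}\widehat{u_{0}}\in S\subset L^{2}$, which is your boundedness observation in slightly different language).
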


\begin{proof}
(i) The proof is similar to the one given for Lemma 7.1 in \cite{T-Z}. (ii)
Note that $e^{-\kappa tA_{w}(\left\Vert \xi\right\Vert _{p})}\widehat{u_{0}%
}\left(  \xi\right)  \in S$ for $t\geq0$, $A_{w}(\left\Vert \xi\right\Vert
_{p})e^{-\kappa tA_{w}(\left\Vert \xi\right\Vert _{p})}\widehat{u_{0}}\left(
\xi\right)  \in S\subset L^{2}$ for $t\geq0$, i.e. $u(x,t)\in
Dom(\boldsymbol{W})$ for $t\geq0$. Now
\begin{align*}
(\boldsymbol{W}u)(x,t)  &  =-\kappa\mathcal{F}_{\xi\rightarrow x}^{-1}\left(
A_{w}(\left\Vert \xi\right\Vert _{p})\mathcal{F}_{\xi\rightarrow x}\left(
u(x,t)\right)  \right) \\
&  =-\kappa\mathcal{F}_{\xi\rightarrow x}^{-1}\left(  A_{w}(\left\Vert
\xi\right\Vert _{p})e^{-tA_{w}(\left\Vert \xi\right\Vert _{p})}\widehat{u_{0}%
}\left(  \xi\right)  \right) \\
&  =-\kappa\mathcal{F}_{\xi\rightarrow x}^{-1}\left(  A_{w}(\left\Vert
\xi\right\Vert _{p})e^{-tA_{w}(\left\Vert \xi\right\Vert _{p})}1_{B_{R}}%
(\xi)\widehat{u_{0}}\left(  \xi\right)  \right) \\
&  =-\kappa\mathcal{F}_{\xi\rightarrow x}^{-1}\left(  e^{-\kappa
tA_{w}(\left\Vert \xi\right\Vert _{p})}A_{w}(\left\Vert \xi\right\Vert
_{p})1_{B_{R}}(\xi)\right)  \ast u_{0}(x).
\end{align*}

\end{proof}

As a direct consequence of Lemma \ref{lemma9} we obtain the following result.

\begin{proposition}
\label{prop2}Assume that $u_{0}\in S$. Then function $u\left(  x,t\right)  $
defined in (\ref{funct_u}) is a solution of Cauchy problem (\ref{Cauchy1}).
\end{proposition}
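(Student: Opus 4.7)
The plan is to assemble the two assertions of Lemma \ref{lemma9} and then verify the regularity requirements in the definition of a solution. The key hypothesis doing all the work is $u_{0}\in S$, because then $\widehat{u_{0}}$ is a Bruhat-Schwartz function with support contained in some ball $B_{R}$, and this compact Fourier support makes every limit in $t$ a bounded dominated-convergence argument.

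First I would combine Lemma \ref{lemma9}(i) and Lemma \ref{lemma9}(ii): the two formulas displayed there for $\partial u/\partial t$ and $(\boldsymbol{W}u)(x,t)$ are literally the same expression, so
\[
\frac{\partial u}{\partial t}(x,t)-(\boldsymbol{W}u)(x,t)=0
\]
for every $t\geq 0$. Together with $u(x,0)=Z_{0}\ast u_{0}=u_{0}(x)$, which is just the convention fixed immediately before (\ref{funct_u}), this already shows that $u$ defined by (\ref{funct_u}) satisfies the Cauchy problem (\ref{Cauchy1}) in the required sense, provided I can also check the regularity in $t$ demanded by the definition of solution.

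Next I would verify $u\in C([0,\infty),Dom(\boldsymbol{W}))\cap C^{1}([0,\infty),L^{2}(\mathbb{Q}_{p}^{n}))$. Lemma \ref{lemma9}(ii) already provides $u(\cdot,t)\in Dom(\boldsymbol{W})$ for every $t\geq 0$. By the Plancherel isometry, the $Dom(\boldsymbol{W})$-continuity of $t\mapsto u(\cdot,t)$ is equivalent to the $L^{2}$-continuity of the two $\xi$-functions $e^{-\kappa tA_{w}(\|\xi\|_{p})}\widehat{u_{0}}(\xi)$ and $A_{w}(\|\xi\|_{p})e^{-\kappa tA_{w}(\|\xi\|_{p})}\widehat{u_{0}}(\xi)$. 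Since $\widehat{u_{0}}$ is supported in $B_{R}$ and $A_{w}$ is continuous (Lemma \ref{lemma2}), both integrands are dominated on $B_{R}$ by $|\widehat{u_{0}}(\xi)|$ times a finite constant depending on $R$; dominated convergence then gives continuity in $t$. The continuity in $L^{2}$ of $\partial u/\partial t$ follows in exactly the same way from the explicit formula in Lemma \ref{lemma9}(i).

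The only real obstacle is the $C^{1}([0,\infty),L^{2})$-regularity of $u$, and that is neutralised by the compact support of $\widehat{u_{0}}$: on $B_{R}$ the symbol $A_{w}(\|\xi\|_{p})$ is bounded, so every term in the differentiation of $e^{-\kappa tA_{w}(\|\xi\|_{p})}\widehat{u_{0}}(\xi)$ is controlled uniformly in $t$ by an $L^{2}$-function of $\xi$. This is exactly the point where the restriction $u_{0}\in S$ is used, and it is precisely what will need to be removed when the result is later extended to general $u_{0}\in Dom(\boldsymbol{W})$.
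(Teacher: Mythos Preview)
Your proposal is correct and follows essentially the same approach as the paper, which records Proposition~\ref{prop2} as an immediate consequence of Lemma~\ref{lemma9}. You are simply more explicit than the paper about verifying the regularity $u\in C([0,\infty),Dom(\boldsymbol{W}))\cap C^{1}([0,\infty),L^{2})$ via the compact Fourier support of $u_{0}$ and dominated convergence, which the paper absorbs into the phrase ``continuously differentiable'' in Lemma~\ref{lemma9}(i).
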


\subsection{Homogeneous equations with initial values in $L^{2}$}

We define
\begin{equation}
T(t)u=\left\{
\begin{array}
[c]{ll}%
Z_{t}\ast u, & t>0\\
& \\
u, & t=0,
\end{array}
\right.  \label{Oper_T}%
\end{equation}
for $u\in L^{2}$.

\begin{lemma}
\label{lemma10}The operator $T(t):L^{2}(\mathbb{Q}_{p}^{n})\longrightarrow
L^{2}(\mathbb{Q}_{p}^{n})$ is bounded for any fixed $t\geq0$.
\end{lemma}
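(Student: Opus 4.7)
The plan is to show that $T(t)$ is in fact a contraction on $L^2(\mathbb{Q}_p^n)$, which immediately gives boundedness. The case $t=0$ is trivial since $T(0)$ is the identity. For $t>0$, the strategy is to move to the Fourier side and use that the multiplier $e^{-\kappa t A_w(\|\xi\|_p)}$ is bounded by $1$.

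First I would invoke Lemma \ref{lemma4} (with $\rho=1$) to ensure $Z_t\in L^1(\mathbb{Q}_p^n)$, so that the convolution $Z_t\ast u$ is defined and its Fourier transform equals the pointwise product $\widehat{Z_t}\cdot\widehat{u}$. By the very definition of $Z_t$ as the inverse Fourier transform of $e^{-\kappa t A_w(\|\xi\|_p)}$, together with Fourier inversion (item (v) of Proposition \ref{prop1}-style inversion on $L^1\cap L^2$), we get
\[
\widehat{Z_t\ast u}(\xi)=e^{-\kappa t A_w(\|\xi\|_p)}\,\widehat{u}(\xi).
\]

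Next, since $A_w(\|\xi\|_p)\geq 0$ by Lemma \ref{lemma2}(ii) and $\kappa,t>0$, the multiplier satisfies $0\leq e^{-\kappa t A_w(\|\xi\|_p)}\leq 1$ pointwise. Applying Plancherel's theorem twice then yields
\[
\|T(t)u\|_{L^2}=\|\widehat{Z_t\ast u}\|_{L^2}=\bigl\|e^{-\kappa t A_w(\|\cdot\|_p)}\widehat{u}\bigr\|_{L^2}\leq\|\widehat{u}\|_{L^2}=\|u\|_{L^2},
\]
so $T(t)$ is bounded with operator norm at most $1$.

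There is no real obstacle here; the only subtlety is justifying the convolution identity $\widehat{Z_t\ast u}=\widehat{Z_t}\widehat{u}$ when $u\in L^2$ rather than in $S$ or $L^1$. One clean way is to approximate $u\in L^2$ by a sequence $u_m\in S$, use that the identity holds on $S$ (where everything is classical), and pass to the limit using Plancherel on the right-hand side and $\|Z_t\ast(u-u_m)\|_{L^2}\leq\|Z_t\|_{L^1}\|u-u_m\|_{L^2}$ via Young's inequality on the left. Alternatively, one can simply \emph{define} $T(t)u$ on $L^2$ as $\mathcal{F}^{-1}(e^{-\kappa t A_w(\|\xi\|_p)}\widehat{u})$ and then verify that this agrees with $Z_t\ast u$ when the latter is classically defined; the contraction estimate above is then immediate and the coincidence $T(t)u=Z_t\ast u$ follows from the density of $S$ in $L^2$.
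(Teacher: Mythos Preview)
Your proof is correct, but the paper takes a more direct route: it simply invokes Young's inequality $\|Z_t\ast u\|_{L^2}\leq\|Z_t\|_{L^1}\|u\|_{L^2}$ together with $Z_t\in L^1$ (Theorem~\ref{Thm1}(iii)), and that is the entire argument. Your approach instead passes to the Fourier side and uses Plancherel with the bounded multiplier $e^{-\kappa t A_w(\|\xi\|_p)}\leq 1$; this yields the contraction bound $\|T(t)\|\leq 1$ explicitly, at the cost of having to justify the convolution-to-product identity for general $u\in L^2$ (which you handle by density). Note that the paper's argument in fact also gives the contraction estimate once one observes $\|Z_t\|_{L^1}=1$ from Theorem~\ref{Thm1}(i)--(ii), so the two approaches ultimately deliver the same conclusion; the paper's is just shorter and avoids the approximation step.
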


\begin{proof}
For $t>0$, the result follows from the Young inequality by using the fact that
$Z_{t}\in L^{1}$, c.f. Theorem \ref{Thm1} (iii).
\end{proof}

\begin{proposition}
\label{prop3} The following assertions hold.

\noindent(i) The operator $\boldsymbol{W}$ generates a $C_{0}$ semigroup
$(\mathcal{T}(t))_{t\geq0}$. The operator $\mathcal{T}(t)$ coincides for each
$t\geq0$ with the operator $T(t)$ given by (\ref{Oper_T}).

\noindent(ii) Cauchy problem (\ref{Cauchy1}) is well-posed and its solution is
given by $u(x,t)=$ $Z_{t}\ast u_{0}$, $t\geq0$.
\end{proposition}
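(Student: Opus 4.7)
The plan is to leverage Proposition~\ref{prop1}(iii), which already establishes that $\boldsymbol{W}$ generates a $C_{0}$ semigroup $(\mathcal{T}(t))_{t\geq 0}$ on $L^{2}(\mathbb{Q}_{p}^{n})$. Standard semigroup theory (e.g.\ \cite[Theorem 3.1.1]{C-H}) then automatically yields that Cauchy problem~(\ref{Cauchy1}) is well-posed for any $u_{0}\in Dom(\boldsymbol{W})$, with unique solution $u(x,t)=\mathcal{T}(t)u_{0}$. The remaining content of the proposition is thus the identification $\mathcal{T}(t)=T(t)$ as operators on $L^{2}$, after which (ii) reads off as $u(x,t)=Z_{t}\ast u_{0}$.

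First I would work on the dense subspace $D\subset L^{2}(\mathbb{Q}_{p}^{n})$ consisting of $u_{0}\in S(\mathbb{Q}_{p}^{n})$ whose Fourier transform has compact support. Every such $u_{0}$ lies in $Dom(\boldsymbol{W})$, and by Proposition~\ref{prop2} (which rests on Lemma~\ref{lemma9}) the function $t\mapsto Z_{t}\ast u_{0}=T(t)u_{0}$ is a solution of the Cauchy problem in the sense of the definition preceding the proposition. Uniqueness of solutions in this strong semigroup sense then forces $T(t)u_{0}=\mathcal{T}(t)u_{0}$ for every $t\geq 0$ and every $u_{0}\in D$.

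To extend the equality to all of $L^{2}$, I would use that, for each fixed $t\geq 0$, both operators are bounded: $T(t)$ by Lemma~\ref{lemma10}, and $\mathcal{T}(t)$ because it is a contraction by Proposition~\ref{prop1}(iii). Density of $D$ in $L^{2}(\mathbb{Q}_{p}^{n})$ follows from density of $S$ combined with a Fourier cut-off argument, so a routine $\varepsilon/3$ approximation yields $T(t)=\mathcal{T}(t)$ on the whole space, giving (i). Part (ii) is then immediate: for $u_{0}\in Dom(\boldsymbol{W})$ the unique solution is $\mathcal{T}(t)u_{0}=T(t)u_{0}=Z_{t}\ast u_{0}$. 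The delicate step is the uniqueness invoked on $D$: it must be understood in the strong sense of the definition preceding the proposition and is a consequence of $\boldsymbol{W}$ being a generator, already implicit in our appeal to \cite{C-H}; once this is stated cleanly, every remaining step reduces to Theorem~\ref{Thm1}, Lemmas~\ref{lemma9}--\ref{lemma10}, and density.
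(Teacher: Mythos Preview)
Your proposal is correct and follows essentially the same approach as the paper's proof: both invoke Proposition~\ref{prop1}(iii) to obtain the semigroup, establish $\mathcal{T}(t)=T(t)$ on $S$ via Proposition~\ref{prop2} together with uniqueness of solutions, and then extend to all of $L^{2}$ using boundedness (Lemma~\ref{lemma10}) and density. Note, incidentally, that since the Fourier transform is an isomorphism of $S(\mathbb{Q}_{p}^{n})$ onto itself, every $u_{0}\in S$ already has compactly supported Fourier transform, so your subspace $D$ is simply $S$ and no separate cut-off argument is needed.
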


\begin{proof}
(i) \ By Proposition \ref{prop1} (iii) the operator $\boldsymbol{W}$ generates
a $C_{0}$ semigroup $(\mathcal{T}(t))_{t\geq0}$. Hence Cauchy problem
(\ref{Cauchy1}) is well-posed, see e.g. \cite[Theorem 3.1.1]{C-H}. By
Proposition \ref{prop2}, $\mathcal{T}(t)|_{S}=T(t)|_{S}$ and both operators
$\mathcal{T}(t)$ and $T(t)$ are defined on the whole $L^{2}$ and bounded, c.f.
Lemma \ref{lemma10}. By the continuity we conclude that $\mathcal{T}(t)|=T(t)$
on $L^{2}$. Now the statements follow from well-known results of the semigroup
theory, see e.g. \cite[Theorem 3.1.1]{C-H}, \cite[Chap. 2, Proposition
6.2]{E-N}.
\end{proof}

\subsection{Non-homogeneous equations}

Consider the following Cauchy problem:%

\begin{equation}
\left\{
\begin{array}
[c]{ll}%
\frac{\partial u}{\partial t}(x,t)-\boldsymbol{W}u(x,t)=g(x,t), &
x\in\mathbb{Q}_{p}^{n},t\in\left[  0,T\right]  ,T>0,\\
& \\
u\left(  x,0\right)  =u_{0}(x), & u_{0}(x)\in Dom(\boldsymbol{W}).
\end{array}
\right.  \label{Cauchy2}%
\end{equation}

We say that a function $u(x,t)$ is a solution of (\ref{Cauchy2}), if
$u(x,t)\in C\left(  [0,T),Dom(\boldsymbol{W})\right)  \cap C^{1}\left(
[0,T],L^{2}(\mathbb{Q}_{p}^{n})\right)  $ and if $u(x,t)$ satisfies equation
(\ref{Cauchy2}) for $t\in\lbrack0,T]$.

\begin{theorem}
\label{Thm3}Assume that $u_{0}\in Dom(\boldsymbol{W})$ and that $g\in C\left(
[0,\infty),L^{2}(\mathbb{Q}_{p}^{n})\right)  \cap L^{1}\left(  (0,\infty
),Dom(\boldsymbol{W})\right)  $. Then Cauchy problem (\ref{Cauchy2}) has a
unique solution given by%
\[
u(x,t)=\underset{\mathbb{Q}_{p}^{n}}{\int}Z(x-\xi,t)u_{0}(\xi)d^{n}\xi+%
{\displaystyle\int\limits_{0}^{t}}
\underset{\mathbb{Q}_{p}^{n}}{\int}Z(x-\xi,t-\theta)g(\xi,\theta)d^{n}\xi
d\theta.
\]

\end{theorem}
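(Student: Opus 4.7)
The plan is to invoke Duhamel's principle in the semigroup framework provided by Proposition~\ref{prop3}: since $\boldsymbol{W}$ generates the $C_{0}$ semigroup $\mathcal{T}(t)$ whose action for $t>0$ is convolution with the heat kernel $Z_{t}$, the candidate formula can be rewritten as
\[
u(\cdot,t) \;=\; \mathcal{T}(t)u_{0} \;+\; \int_{0}^{t}\mathcal{T}(t-\theta)\,g(\cdot,\theta)\,d\theta \;=:\; u_{h}(t)+v(t).
\]
The first piece $u_{h}$ already solves the homogeneous Cauchy problem by Proposition~\ref{prop3}(ii), so the whole argument reduces to showing that $v$ belongs to $C([0,T],\mathrm{Dom}(\boldsymbol{W}))\cap C^{1}([0,T],L^{2})$, vanishes at $t=0$, and satisfies $\partial_{t}v-\boldsymbol{W}v=g$.

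The next step is to verify that $v(t)\in \mathrm{Dom}(\boldsymbol{W})$ for every $t\in[0,T]$. Since $g(\cdot,\theta)\in \mathrm{Dom}(\boldsymbol{W})$ for a.e.\ $\theta$, and $\mathcal{T}(s)$ commutes with $\boldsymbol{W}$ on its domain (a standard property of the generator of a $C_{0}$ semigroup), the integrand $\mathcal{T}(t-\theta)g(\cdot,\theta)$ lies in $\mathrm{Dom}(\boldsymbol{W})$ and satisfies $\boldsymbol{W}\mathcal{T}(t-\theta)g(\cdot,\theta)=\mathcal{T}(t-\theta)\boldsymbol{W}g(\cdot,\theta)$. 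Using the contractivity of $\mathcal{T}$ from Proposition~\ref{prop1}(iii) together with the hypothesis $g\in L^{1}((0,\infty),\mathrm{Dom}(\boldsymbol{W}))$, the closedness of $\boldsymbol{W}$ permits pulling $\boldsymbol{W}$ inside the Bochner integral, yielding
\[
\boldsymbol{W}v(t) \;=\; \int_{0}^{t}\mathcal{T}(t-\theta)\,\boldsymbol{W}g(\cdot,\theta)\,d\theta,
\]
which is continuous in $t$ with values in $L^{2}$.

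For the time derivative I would use the classical decomposition
\[
\frac{v(t+h)-v(t)}{h} \;=\; \frac{1}{h}\int_{t}^{t+h}\mathcal{T}(t+h-\theta)\,g(\cdot,\theta)\,d\theta \;+\; \frac{\mathcal{T}(h)-I}{h}\,v(t).
\]
As $h\to 0^{+}$, the strong continuity of $\mathcal{T}$ together with $g\in C([0,T],L^{2})$ sends the first term to $g(\cdot,t)$, while $v(t)\in \mathrm{Dom}(\boldsymbol{W})$ forces the second term to $\boldsymbol{W}v(t)$; an analogous one-sided computation handles $h\to 0^{-}$. This yields $\partial_{t}v=\boldsymbol{W}v+g$ in $C([0,T],L^{2})$. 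Uniqueness is immediate: the difference of any two solutions of (\ref{Cauchy2}) solves the homogeneous Cauchy problem (\ref{Cauchy1}) with zero initial datum and so must vanish by Proposition~\ref{prop3}(ii).

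The main obstacle will be the interchange of $\boldsymbol{W}$ with the integral defining $v(t)$. This step relies simultaneously on the commutation relation $\boldsymbol{W}\mathcal{T}(s)=\mathcal{T}(s)\boldsymbol{W}$ on $\mathrm{Dom}(\boldsymbol{W})$, the contractivity of the semigroup, and the closedness of $\boldsymbol{W}$, and it is precisely the point at which the hypothesis $g\in L^{1}((0,\infty),\mathrm{Dom}(\boldsymbol{W}))$ is essential; the weaker hypothesis $g\in C([0,\infty),L^{2})$ alone would only deliver a mild solution. Once that interchange is justified, the remainder of the argument is standard $C_{0}$-semigroup calculus.
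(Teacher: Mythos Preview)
Your proposal is correct and follows essentially the same route as the paper: the paper's own proof simply invokes Proposition~\ref{prop3} together with a standard semigroup result (\cite[Proposition~4.1.6]{C-H}), and what you have written is precisely a detailed unpacking of that cited proposition---Duhamel's formula plus the closedness/commutation argument to upgrade the mild solution to a classical one under the hypothesis $g\in L^{1}((0,\infty),\mathrm{Dom}(\boldsymbol{W}))$. The only difference is one of presentation: the paper outsources the verification to Cazenave--Haraux, while you carry it out explicitly.
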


\begin{proof}
The result follows from Proposition \ref{prop3} \ by using some well-known
results of the semigroup theory, see e.g. \cite[Proposition 4.1.6]{C-H}.
\end{proof}

\begin{remark}
In \cite[Sect. 10.3]{A-K-S} a general theory for Cauchy Problems involving
pseudodifferential operators on Lizorkin spaces was developed. By applying
Proposition \ref{prop1} (i) and Theorem 10.3.1 in \cite{A-K-S} we can solve
Cauchy problem (\ref{Cauchy2}) when $u_{0}(x)$\ and $g(x,t)$ belong to a
certain Lizorkin space. However the results obtained by using this approach
are not sufficient for the purposes of application to stochastic processes, in
particular we need several properties of the semigroup associated with
operator $\boldsymbol{W}$.
\end{remark}

\section{\label{Sect6}First Passage Time Problem}

Consider the following Cauchy problem:%

\begin{equation}
\left\{
\begin{array}
[c]{ccc}%
\frac{\partial\varphi(x,t)}{\partial t}=\boldsymbol{W}\varphi(x,t), & t>0, &
x\in\mathbb{Q}_{p}^{n},\\
&  & \\
\varphi(x,t)=\Omega\left(  \left\Vert x\right\Vert _{p}\right)  . &  &
\end{array}
\right.  \label{Cauchy3}%
\end{equation}

In this section we assume that $\kappa$ satisfies%
\begin{equation}
\kappa{\int\limits_{\left\Vert y\right\Vert _{p}>1}}\frac{d^{n}y}{w\left(
\left\Vert y\right\Vert _{p}\right)  }\leq1. \label{Def_k}%
\end{equation}

By Theorem \ref{Thm3}\ the solution of (\ref{Cauchy3}) is $\varphi
(x,t)=Z_{t}\left(  x\right)  \ast\Omega\left(  \left\Vert x\right\Vert
_{p}\right)  $, and by Theorem \ref{Thm2} $Z_{t}$ is transition density of a
time and space homogeneous Markov process $X\left(  t,\omega\right)  $\ whose
paths has only first class discontinuities. Set $\Upsilon$ to be the space of
all paths of the random process $X\left(  t,\omega\right)  $. Then there
exists a probability space $\left(  \Upsilon,\mathcal{B},P\right)  $, where
$P$ is a probability measure on $\Upsilon$. The construction of this
probability space follows from classical arguments, see e.g. \cite[pp.
338-339]{Nelson} or \cite[proof of Theorem 5.9]{Lor-et-al}. The argument uses
the one-point compactification $\overline{\mathbb{Q}}_{p}^{n}$ of
$\mathbb{Q}_{p}^{n}$ by a point, and that $\Upsilon=%
{\textstyle\prod\limits_{0\leq t<\infty}}
\overline{\mathbb{Q}}_{p}^{n}(t)$, where the $\overline{\mathbb{Q}}_{p}%
^{n}(t)$ are copies of $\overline{\mathbb{Q}}_{p}^{n}$. The construction of
$P$ follows from the Stone-Weierstrass Theorem and Riesz-Markov Theorem like
in the Archimedean case. The probability $P\left(  d\omega\right)  $ is
roughly $%
{\textstyle\prod\limits_{i=1}^{+\infty}}
\left[  Z_{t_{i}}\left(  x_{i}\right)  \ast\Omega\left(  \left\Vert
x_{i}\right\Vert _{p}\right)  \right]  dx_{i}$.

This section is dedicated to the study of the following random variable.

\begin{definition}
The random variable $\tau_{\mathbb{Z}_{p}^{n}}:\,\Upsilon\rightarrow
\mathbb{R}_{+}\cup\left\{  +\infty\right\}  $ defined by
\[
\inf\{t>0;X(t,\omega)\in\mathbb{Z}_{p}^{n}\mid\text{there exists }t^{\prime
}\text{ such that }0<t^{\prime}<t\text{ and }X(t^{\prime},\omega
)\notin\mathbb{Z}_{p}^{n}\}
\]
is called the first passage time of a path of the random process $X(t,\omega)$
entering the domain $\mathbb{Z}_{p}^{n}$.
\end{definition}

Note that the initial condition in (\ref{Cauchy3}) implies that%
\[
P\left(  \left\{  \omega\in\Upsilon;X(0,\omega)\in\mathbb{Z}_{p}^{n}\right\}
\right)  =1.
\]

\begin{definition}
We say that $X(t,\omega)$ is recurrent \ with respect to $\mathbb{Z}_{p}^{n}$
if%
\begin{equation}
P\left(  \left\{  \omega\in\Upsilon;\tau_{\mathbb{Z}_{p}^{n}}(\omega
)\,<+\infty\right\}  \right)  =1. \label{Pr}%
\end{equation}
Otherwise we say that $X(t,\omega)$ is transient\ with respect to
$\mathbb{Z}_{p}^{n}$.
\end{definition}

The meaning of (\ref{Pr}) is that every path of $X(t,\omega)$ \ is sure to
return to $\mathbb{Z}_{p}^{n}$. If (\ref{Pr})\ does not hold, then there exist
paths of $X(t,\omega)$\ that abandon $\mathbb{Z}_{p}^{n}$ and never go back.

\begin{lemma}
\label{lemma11}The function $\varphi(x,t)=Z_{t}\left(  x\right)  \ast
\Omega\left(  \left\Vert x\right\Vert _{p}\right)  $ is infinitely
differentiable in the time $t\geq0$ and its derivative is given by%
\begin{equation}
\frac{\partial^{m}\varphi}{\partial t^{m}}(x,t)=\left(  -\kappa\right)
^{m}\int_{\mathbb{Q}_{p}^{n}}\left[  A(\left\Vert \xi\right\Vert _{p})\right]
^{m}\Psi(\xi\cdot x)\Omega(\left\Vert \xi\right\Vert _{p})e^{-\kappa
tA(\left\Vert \xi\right\Vert _{p})}d^{n}\xi\text{ for }m\in\mathbb{N}.
\label{eq:function u'(x,t)}%
\end{equation}

\end{lemma}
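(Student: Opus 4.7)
The plan is to first express $\varphi(x,t)$ as a single Fourier integral over the compact set $\mathbb{Z}_p^n$, and then differentiate under the integral sign using the fact that $A_w(\|\xi\|_p)$ is bounded on that set.

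First I would use the fact that $\Omega(\|x\|_p)$ is the characteristic function of $\mathbb{Z}_p^n = B_0^n$, so it coincides with its own Fourier transform, i.e.\ $\mathcal{F}[\Omega(\|\cdot\|_p)](\xi) = \Omega(\|\xi\|_p)$. Combining this with Theorem \ref{Thm1}(iii)--(iv) and $Z_t = \mathcal{F}_{\xi \to x}^{-1}[e^{-\kappa t A_w(\|\xi\|_p)}]$, the convolution theorem rewrites the solution as
\[
\varphi(x,t) \;=\; \int_{\mathbb{Q}_p^n} \Psi(\xi\cdot x)\,\Omega(\|\xi\|_p)\,e^{-\kappa t A_w(\|\xi\|_p)}\, d^n\xi \;=\; \int_{\mathbb{Z}_p^n} \Psi(\xi\cdot x)\,e^{-\kappa t A_w(\|\xi\|_p)}\, d^n\xi,
\]
so that the integration is effectively restricted to a compact set. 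Note that the use of Theorem \ref{Thm3} is legitimate because $\Omega(\|x\|_p) \in Dom(\boldsymbol{W})$ (it is a Bruhat-Schwartz function whose Fourier transform has compact support, hence $A_w(\|\xi\|_p)\widehat{\Omega}(\xi) \in L^2$).

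Next, by Lemma \ref{lemma2} the function $A_w$ is continuous on $\mathbb{Q}_p^n$, and since $\mathbb{Z}_p^n$ is compact we obtain $M := \sup_{\xi \in \mathbb{Z}_p^n} A_w(\|\xi\|_p) < \infty$. Consequently, for every $m \in \mathbb{N}$ and every $t \geq 0$,
\[
\bigl|\,(-\kappa)^{m}\,[A_w(\|\xi\|_p)]^m\,\Psi(\xi\cdot x)\,e^{-\kappa t A_w(\|\xi\|_p)}\,\bigr| \;\leq\; (\kappa M)^{m},
\]
uniformly in $(x,t,\xi) \in \mathbb{Q}_p^n \times [0,\infty) \times \mathbb{Z}_p^n$. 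This provides an $L^1$ majorant (on the compact domain $\mathbb{Z}_p^n$) that is independent of $t$, so the standard theorem on differentiation under the integral sign applies. A straightforward induction on $m$ then yields formula (\ref{eq:function u'(x,t)}), after extending the domain of integration back to $\mathbb{Q}_p^n$ by reinserting the factor $\Omega(\|\xi\|_p)$.

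The ``main obstacle'' here is really no obstacle at all: the key point is that the cutoff $\Omega(\|\xi\|_p)$ localizes the Fourier side to a compact set, on which continuity of $A_w$ automatically gives the uniform bounds needed. There is no interaction with the growth of $A_w$ at infinity (controlled by Lemma \ref{lemma3}), which would otherwise be the natural concern when differentiating the heat kernel repeatedly.
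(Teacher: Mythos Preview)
Your proof is correct and follows essentially the same approach as the paper: both arguments rely on the fact that the factor $\Omega(\|\xi\|_p)$ restricts the integrand to the compact set $\mathbb{Z}_p^n$, on which $A_w$ is bounded, and then apply the Dominated Convergence Theorem together with induction on $m$. The paper's proof is simply terser, stating only that $[A_w(\|\xi\|_p)]^m\Psi(\xi\cdot x)\Omega(\|\xi\|_p)e^{-\kappa t A_w(\|\xi\|_p)}\in L^1(\mathbb{Q}_p^n)$ without spelling out the integral representation of $\varphi$ or the explicit bound, whereas you make these steps explicit.
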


\begin{proof}
Note that for $t\geq0$ and $m\in\mathbb{N}$, \ $\left[  A(\left\Vert
\xi\right\Vert _{p})\right]  ^{m}\Psi(\xi\cdot x)\Omega(\left\Vert
\xi\right\Vert _{p})e^{-\kappa tA(\left\Vert \xi\right\Vert _{p})}\in
L^{1}(\mathbb{Q}_{p}^{n})$. The announced formula is obtained by induction on
$m$ by applying the Lebesgue Dominated Convergence Theorem.
\end{proof}

\begin{lemma}
\label{lemma12}The probability density function for a path of $X(t,\omega)$ to
enter into $\mathbb{Z}_{p}^{n}$ at the instant of time $t$, with the condition
that $X(0,\omega)\in$ $\mathbb{Z}_{p}^{n}$ is given by%
\begin{equation}
g(t)=\kappa\underset{\mathbb{Q}_{p}^{n}\smallsetminus\mathbb{Z}_{p}^{n}}{\int
}\frac{\varphi(x,t)}{w(\left\Vert x\right\Vert _{p})}d^{n}x. \label{g(t)}%
\end{equation}

\end{lemma}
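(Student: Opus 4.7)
The plan is to recognize $g(t)$ as the probability flux from $\mathbb{Q}_p^n\smallsetminus\mathbb{Z}_p^n$ into $\mathbb{Z}_p^n$ at time $t$, and then to evaluate that flux explicitly by exploiting the ultrametric structure of $\|\cdot\|_p$. Since $X(t,\omega)$ is a Markov jump process whose transition rate from a state $y$ to a state $x$ is $\kappa/w(\|x-y\|_p)$, and since $\varphi(x,t)$ is the density of $X(t,\omega)$ starting from the normalized uniform distribution on $\mathbb{Z}_p^n$, the infinitesimal probability of entering $\mathbb{Z}_p^n$ during $[t,t+dt]$ from outside is
$$
g(t)\,dt \;=\; \left(\int_{\mathbb{Q}_p^n\smallsetminus\mathbb{Z}_p^n}\varphi(x,t)\,\lambda(x)\,d^n x\right) dt,\qquad
\lambda(x) \;:=\; \int_{\mathbb{Z}_p^n}\frac{\kappa}{w(\|x-y\|_p)}\,d^n y,
$$
where $\lambda(x)$ is the total rate at which a path sitting at $x$ jumps into $\mathbb{Z}_p^n$.

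The core computation is the simplification of $\lambda(x)$ for $x\notin\mathbb{Z}_p^n$. Whenever $\|x\|_p>1$ and $\|y\|_p\leq 1$, the strong triangle inequality forces $\|x-y\|_p=\|x\|_p$, so $w(\|x-y\|_p)=w(\|x\|_p)$ is constant as $y$ ranges over $\mathbb{Z}_p^n$. Using $\mathrm{vol}(\mathbb{Z}_p^n)=1$ yields $\lambda(x)=\kappa/w(\|x\|_p)$ for every $x\in\mathbb{Q}_p^n\smallsetminus\mathbb{Z}_p^n$, and substituting this identity into the integral above gives the announced formula (\ref{g(t)}).

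To make the flux interpretation rigorous, I would confirm it with a mass balance on $\mathbb{Z}_p^n$. Setting $P_{\mathrm{in}}(t):=\int_{\mathbb{Z}_p^n}\varphi(x,t)\,d^n x$ and differentiating (justified by Lemma \ref{lemma11}) and substituting the master equation gives
$$
\frac{dP_{\mathrm{in}}}{dt}(t)\;=\;\kappa\int_{\mathbb{Z}_p^n}\int_{\mathbb{Q}_p^n}\frac{\varphi(x-y,t)-\varphi(x,t)}{w(\|y\|_p)}\,d^n y\,d^n x.
$$
Contributions from $y\in\mathbb{Z}_p^n$ cancel, because $\mathbb{Z}_p^n$ is translation invariant under such $y$. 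For $y\notin\mathbb{Z}_p^n$, a Fubini interchange combined with the same ultrametric identity turns the positive part into $\kappa\int_{\mathbb{Q}_p^n\smallsetminus\mathbb{Z}_p^n}\varphi(u,t)/w(\|u\|_p)\,d^n u$ (via the change of variable $u=x-y$, using that for each $y\notin\mathbb{Z}_p^n$ the translate $\mathbb{Z}_p^n-y$ is a ball of volume $1$ sitting inside $\mathbb{Q}_p^n\smallsetminus\mathbb{Z}_p^n$ with $\|u\|_p=\|y\|_p$), and the negative part into the outgoing flux $-\kappa P_{\mathrm{in}}(t)\int_{\|y\|_p>1}d^n y/w(\|y\|_p)$. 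Identifying the positive (incoming) term with $g(t)$ produces (\ref{g(t)}). The main delicacy is this Fubini bookkeeping, but the ultrametric identity reduces it to a reshuffling of translates of the unit ball.
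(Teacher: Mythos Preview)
Your proposal is correct and follows essentially the same route as the paper: define the survival probability $S(t)=\int_{\mathbb{Z}_p^n}\varphi(x,t)\,d^nx$, differentiate using Lemma~\ref{lemma11}, insert the master equation, observe that the $y\in\mathbb{Z}_p^n$ contribution cancels by translation invariance of $\mathbb{Z}_p^n$, and use the ultrametric identity $\|x-y\|_p=\|y\|_p$ for $x\in\mathbb{Z}_p^n$, $y\notin\mathbb{Z}_p^n$ to identify the incoming term as $g(t)$ and the outgoing term as $C\cdot S(t)$ with $C=\kappa\int_{\|y\|_p>1}w(\|y\|_p)^{-1}\,d^ny$. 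One minor simplification: in the positive term the paper fixes $x\in\mathbb{Z}_p^n$ and changes $y\mapsto z=x+y$, which is directly a bijection of $\mathbb{Q}_p^n\smallsetminus\mathbb{Z}_p^n$ onto itself with $\|z\|_p=\|y\|_p$, so the inner integral is immediately $\int_{\mathbb{Q}_p^n\smallsetminus\mathbb{Z}_p^n}\varphi(z,t)/w(\|z\|_p)\,d^nz$ and the outer $x$-integral contributes only the factor $\mathrm{vol}(\mathbb{Z}_p^n)=1$; this avoids the extra Fubini/covering step your version needs after fixing $y$.
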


\begin{proof}
The survival probability, by definition%
\[
S(t):=S_{\mathbb{Z}_{p}^{n}}(t)=\underset{\mathbb{Z}_{p}^{n}}{\int}%
\varphi(x,t)d^{n}x,
\]
is the probability \ that a path of $X(t,\omega)$ remains in $\mathbb{Z}%
_{p}^{n}$\ at the time $t$. Because there are no external forces acting on the
random walk, we have
\begin{align}
S^{\prime}(t)  &  =%
\begin{array}
[c]{l}%
\text{Probability that a path of }X(t,\omega)\\
\text{goes back to }\mathbb{Z}_{p}^{n}\text{ at the time }t
\end{array}
-%
\begin{array}
[c]{l}%
\text{Probability that a path of }X(t,\omega)\\
\text{exits }\mathbb{Z}_{p}^{n}\text{ at the time }t
\end{array}
\label{S}\\
&  =g(t)-C\cdot S(t)\text{ with }0<C\leq1.\nonumber
\end{align}

By using Lemma \ref{lemma11} and the definition of $\boldsymbol{W}$, we have%

\begin{align*}
S^{\prime}(t)  &  =\underset{\mathbb{Z}_{p}^{n}}{\int}\frac{\partial
\varphi(x,t)}{\partial t}d^{n}x=\kappa\underset{\mathbb{Z}_{p}^{n}}{\int\text{
}}\underset{\mathbb{Q}_{p}^{n}}{\int}\frac{\varphi(x+y,t)-\varphi
(x,t)}{w(\left\Vert y\right\Vert _{p})}d^{n}yd^{n}x\\
&  =\underset{\mathbb{Z}_{p}^{n}}{\kappa\int}\underset{\mathbb{Z}_{p}^{n}%
}{\text{ }\int}\frac{\varphi(x+y,t)-\varphi(x,t)}{w(\left\Vert y\right\Vert
_{p})}d^{n}yd^{n}x+\kappa\underset{\mathbb{Z}_{p}^{n}}{\int}\text{ }%
\underset{\mathbb{Q}_{p}^{n}\smallsetminus\mathbb{Z}_{p}^{n}}{\int}%
\frac{\varphi(x+y,t)-\varphi(x,t)}{w(\left\Vert y\right\Vert _{p})}d^{n}%
yd^{n}x\\
&  =\kappa\underset{\mathbb{Z}_{p}^{n}}{\int}\text{ }\underset{\mathbb{Q}%
_{p}^{n}\smallsetminus\mathbb{Z}_{p}^{n}}{\int}\frac{\varphi(x+y,t)}%
{w(\left\Vert y\right\Vert _{p})}d^{n}yd^{n}x-\kappa\underset{\mathbb{Z}%
_{p}^{n}}{\int}\text{ }\underset{\mathbb{Q}_{p}^{n}\smallsetminus
\mathbb{Z}_{p}^{n}}{\int}\frac{\varphi(x,t)}{w(\left\Vert y\right\Vert _{p}%
)}d^{n}yd^{n}x\\
&  =\kappa\underset{\mathbb{Z}_{p}^{n}}{\int}\text{ }\underset{\mathbb{Q}%
_{p}^{n}\smallsetminus\mathbb{Z}_{p}^{n}}{\int}\frac{\varphi(z,t)}%
{w(\left\Vert z\right\Vert _{p})}d^{n}zd^{n}x-\underset{\mathbb{Z}_{p}^{n}%
}{\int}\varphi(x,t)d^{n}x\underset{\mathbb{Q}_{p}^{n}\smallsetminus
\mathbb{Z}_{p}^{n}}{\int}\frac{\kappa}{w(\left\Vert y\right\Vert _{p})}%
d^{n}y\\
&  =\underset{\mathbb{Q}_{p}^{n}\smallsetminus\mathbb{Z}_{p}^{n}}{\kappa\int
}\frac{\varphi(z,t)}{w(\left\Vert z\right\Vert _{p})}d^{n}z-\left(
\underset{\mathbb{Q}_{p}^{n}\smallsetminus\mathbb{Z}_{p}^{n}}{\int}%
\frac{\kappa}{w(\left\Vert y\right\Vert _{p})}d^{n}y\right)  S(t).
\end{align*}
Take $C=\underset{\mathbb{Q}_{p}^{n}\smallsetminus\mathbb{Z}_{p}^{n}}{\int
}\frac{\kappa}{w(\left\Vert y\right\Vert _{p})}d^{n}y\leq1$, c.f.
(\ref{Def_k}). Finally, by using (\ref{S}), one gets
\[
g(t)=\kappa\underset{\mathbb{Q}_{p}^{n}\smallsetminus\mathbb{Z}_{p}^{n}}{\int
}\frac{\varphi(x,t)}{w(\left\Vert x\right\Vert _{p})}d^{n}x.
\]

\end{proof}

\begin{proposition}
\label{prop4}The probability density function $f(t)$ of the random variable
$\tau(\omega)$ satisfies the non-homogeneous Volterra equation of second kind%
\begin{equation}
g(t)=\int_{0}^{\infty}g(t-\tau)f(\tau)d\tau+f(t). \label{eq:Volterra}%
\end{equation}

\end{proposition}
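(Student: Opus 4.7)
The plan is to establish the stated identity by a classical renewal argument. Recall from Lemma \ref{lemma12} that $g(t)$ measures the density in $t$ of entries of a path of $X(t,\omega)$ into $\mathbb{Z}_{p}^{n}$, counting \emph{every} re-entry (this is what emerges in the computation of $S'(t)$ as a gain term), whereas by definition $f(t)$ is the density of the \emph{first} such re-entry $\tau_{\mathbb{Z}_{p}^{n}}$. The strategy is to decompose each re-entry at time $t$ according to whether it is itself the first re-entry or whether it comes after an earlier first re-entry.

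First, I would partition the infinitesimal event ``some path re-enters $\mathbb{Z}_{p}^{n}$ during $[t,t+dt]$'' into the two disjoint alternatives: (a) this is the first re-entry, contributing $f(t)\,dt$; or (b) the first re-entry already occurred at some earlier time $\tau\in(0,t)$, and another re-entry occurs in $[t,t+dt]$. Second, I would apply the strong Markov property at the stopping time $\tau_{\mathbb{Z}_{p}^{n}}$, together with the time- and space-homogeneity of the process provided by Theorem \ref{Thm2}: conditional on $\tau_{\mathbb{Z}_{p}^{n}}=\tau$, the process restarts at the random point $X(\tau,\omega)\in\mathbb{Z}_{p}^{n}$, and because $\mathbb{Z}_{p}^{n}$ is a subgroup of $\mathbb{Q}_{p}^{n}$ and $Z_{t}$ is translation-invariant, the conditional density of any subsequent re-entry into $\mathbb{Z}_{p}^{n}$ at time $t$ coincides with $g(t-\tau)$, independently of the re-entry location.

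Summing the two contributions and integrating over $\tau\in(0,t)$ yields
\[
g(t) \; = \; f(t) \; + \; \int_{0}^{t} f(\tau)\, g(t-\tau)\, d\tau,
\]
which is the stated equation once we adopt the convention $g(s)=0$ for $s\leq 0$ and thereby extend the upper limit of integration to $\infty$.

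The main obstacle is the rigorous justification of the strong-Markov/renewal step, in particular the assertion that, after the first re-entry, the density of further re-entries is again $g$ regardless of where inside $\mathbb{Z}_{p}^{n}$ the path landed. This is exactly where the ultrametric structure matters: translation by any element of $\mathbb{Z}_{p}^{n}$ maps $\mathbb{Z}_{p}^{n}$ onto itself, so the translation-invariance of $Z_{t}$ implies that the rate of re-entries into $\mathbb{Z}_{p}^{n}$ starting from an arbitrary point $x\in\mathbb{Z}_{p}^{n}$ does not depend on $x$ and equals the rate computed with the uniform initial density $\Omega(\|x\|_{p})$ used in the definition of $g$. The analogous reduction would fail for an arbitrary bounded open set, so the use of the subgroup $\mathbb{Z}_{p}^{n}$, rather than a generic ball around an arbitrary point, is essential for the renewal argument to close without further bookkeeping of the re-entry distribution.
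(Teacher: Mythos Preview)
Your renewal argument is correct and is precisely the approach the paper adopts: its proof consists of a one-line reference to Lemma~\ref{lemma12} together with ``the argument given in the proof of Theorem~1 in \cite{Av-2},'' which is exactly the decomposition into first re-entry versus subsequent re-entries combined with the strong Markov property and translation invariance that you have spelled out. In effect you have reconstructed the content of the cited reference, including the key observation that the subgroup structure of $\mathbb{Z}_{p}^{n}$ makes the post-$\tau$ re-entry density independent of the landing point.
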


\begin{proof}
The result follows from Lemma \ref{lemma12}\ by using the argument given in
the proof \ of Theorem 1 in \cite{Av-2}.
\end{proof}

\begin{proposition}
\label{prop5}The Laplace transform $G(s)$ of $g(t)$ is given by%
\begin{equation}
G(s)=\kappa^{2}(1-p^{-n})\sum_{i=1}^{\infty}\frac{p^{in}}{w(p^{i})}\sum
_{j=i}^{\infty}\frac{\frac{p^{n}-2}{w(p^{j+1})}+\frac{p^{-n}}{w(p^{j})}%
}{\left(  s+\kappa A_{w}(p^{-j})\right)  \left(  s+\kappa A_{w}(p^{-j+1}%
)\right)  }\quad\text{for}\quad\operatorname{Re}(s)>0. \label{G_s}%
\end{equation}

\end{proposition}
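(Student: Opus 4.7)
The plan is to compute $G(s)$ by writing $\varphi(x,t)$ via its inverse Fourier representation, exchanging integration orders, and decomposing the resulting integrals along $p$-adic spheres; the target double-pole form in \eqref{G_s} then comes from an Abel summation together with the explicit difference formula for $A_w$ obtained from Lemma \ref{lemma2}.

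First, since the Fourier transform of $\Omega(\Vert\cdot\Vert_p)$ is itself, Proposition \ref{prop3} gives
\[
\varphi(x,t)=\int_{\mathbb{Z}_p^n}\Psi(x\cdot\xi)e^{-\kappa tA_w(\Vert\xi\Vert_p)}d^n\xi\geq 0.
\]
Plugging this into \eqref{g(t)}, taking the Laplace transform in $t$, and using Fubini (justified for $\operatorname{Re}(s)>0$ by positivity of $\varphi$ and the integrability of $1/w$ on $\{\Vert y\Vert_p>1\}$, c.f. \eqref{Ec1}) yields
\[
G(s)=\kappa\int_{\mathbb{Z}_p^n}\frac{1}{s+\kappa A_w(\Vert\xi\Vert_p)}\Bigl(\int_{\Vert x\Vert_p>1}\frac{\Psi(x\cdot\xi)}{w(\Vert x\Vert_p)}d^nx\Bigr)d^n\xi.
\]

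Second, I evaluate the spherical integrals. For $\Vert\xi\Vert_p=p^{-j}$ with $j\geq 0$ and $i\geq 1$, the scaling $x=p^{-i}y$ (with $y\in U$), the symmetry $y\mapsto -y$, and formula \eqref{formula} give
\[
\int_{\Vert x\Vert_p=p^i}\Psi(x\cdot\xi)d^nx=p^{in}\cdot\begin{cases}1-p^{-n} & j\geq i,\\ -p^{-n} & j=i-1,\\ 0 & j\leq i-2.\end{cases}
\]
Decomposing $\mathbb{Z}_p^n=\bigsqcup_{j\geq 0}\{\Vert\xi\Vert_p=p^{-j}\}$ (each sphere of volume $p^{-jn}(1-p^{-n})$) and summing over $i\geq 1$ (with the convention $\sum_{i=1}^{0}=0$), one gets
\[
G(s)=\kappa(1-p^{-n})\sum_{j=0}^\infty\frac{p^{-jn}}{s+\kappa A_w(p^{-j})}\Bigl[(1-p^{-n})\sum_{i=1}^{j}\frac{p^{in}}{w(p^i)}-\frac{p^{jn}}{w(p^{j+1})}\Bigr].
\]

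Third, I swap the order of the double sum ($\sum_j\sum_{i\leq j}\to\sum_i\sum_{j\geq i}$) and reindex the single-sum term by $i=j+1$, obtaining
\[
G(s)=\kappa(1-p^{-n})\sum_{i=1}^\infty\frac{p^{in}}{w(p^i)}\Bigl[(1-p^{-n})\sum_{j=i}^\infty\frac{p^{-jn}}{s+\kappa A_w(p^{-j})}-\frac{p^{-in}}{s+\kappa A_w(p^{-i+1})}\Bigr].
\]
A short Abel summation (splitting the bracket into two sums and shifting the index of the second by one) rewrites the bracket as
\[
\sum_{j=i}^\infty p^{-jn}\Bigl[\frac{1}{s+\kappa A_w(p^{-j})}-\frac{1}{s+\kappa A_w(p^{-j+1})}\Bigr].
\]
A direct computation from \eqref{fomula_A} gives $A_w(p^{-j+1})-A_w(p^{-j})=p^{nj}\bigl[\frac{p^n-2}{w(p^{j+1})}+\frac{p^{-n}}{w(p^j)}\bigr]$, so combining the two simple poles into a double pole via $\tfrac{1}{s+a}-\tfrac{1}{s+b}=\tfrac{b-a}{(s+a)(s+b)}$ produces exactly the summand in \eqref{G_s}, with the second factor of $\kappa$ in the target coming from $\kappa(A_w(p^{-j+1})-A_w(p^{-j}))$.

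The main obstacles are (a) justifying Fubini across the Laplace transform and the two spatial integrals (using positivity, the hypothesis $\operatorname{Re}(s)>0$, and \eqref{Ec1}); and (b) the recognition step—the natural presentation of $G(s)$ has \emph{simple} poles $1/(s+\kappa A_w(p^{-j}))$, and one must conjure the specific Abel summation and difference identity for $A_w$ supplied by Lemma \ref{lemma2} to recast it in the \emph{double}-pole form \eqref{G_s}. The remaining work is bookkeeping with indices (boundary cases $j=0$ and $i=j+1$, sphere volumes, and the swap of the double sum).
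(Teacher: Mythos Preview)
Your proposal is correct and follows essentially the same route as the paper: Fubini to reduce $G(s)$ to a double $p$-adic integral, spherical decomposition using formula \eqref{formula}, then the telescoping rewrite of the inner sum combined with the difference identity $A_w(p^{-j+1})-A_w(p^{-j})=p^{nj}\bigl[\tfrac{p^n-2}{w(p^{j+1})}+\tfrac{p^{-n}}{w(p^j)}\bigr]$ from \eqref{fomula_A}. The only cosmetic difference is that the paper decomposes both the $x$- and $\xi$-integrals into spheres simultaneously (computing $\int_U\int_U\Psi(p^{j-i}y\cdot y')\,d^ny\,d^ny'$), whereas you first evaluate the inner $x$-integral for fixed $\xi$ and then swap the resulting double sum; both arrive at the same intermediate expression before the telescope. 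Your Fubini justification via positivity and \eqref{Ec1} is a bit terse compared to the paper's explicit $L^1$ bound, but it is adequate since $|e^{-st}e^{-\kappa tA_w}\Omega|/w(\Vert x\Vert_p)\le e^{-\operatorname{Re}(s)t}/w(\Vert x\Vert_p)$ is integrable on $(0,\infty)\times\mathbb{Z}_p^n\times(\mathbb{Q}_p^n\setminus\mathbb{Z}_p^n)$.
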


\begin{proof}
We first note that%
\begin{equation}
\frac{e^{-st}e^{-\kappa tA_{w}\left(  \left\Vert \xi\right\Vert _{p}\right)
}\Omega\left(  \left\Vert \xi\right\Vert _{p}\right)  }{w\left(  \left\Vert
x\right\Vert _{p}\right)  }\in L^{1}\left(  \left(  0,\infty\right)
\times\mathbb{Q}_{p}^{n}\times\mathbb{Q}_{p}^{n}\smallsetminus\mathbb{Z}%
_{p}^{n},dtd^{n}\xi d^{n}x\right)  \text{ for }\operatorname{Re}\left(
s\right)  >0. \label{condition_2}%
\end{equation}
We compute the Laplace transform $G(s)$ of $g(t)$ by\ replacing
\[
\varphi(x,t)=\underset{\mathbb{Q}_{p}^{n}}{\int}e^{-\kappa tA_{w}(\left\Vert
\xi\right\Vert _{p})}\Omega(\left\Vert \xi\right\Vert _{p})\Psi(\xi\cdot
x)d^{n}\xi
\]
in (\ref{g(t)}) and interchanging the iterated integrals in a suitable form,
which is allowed by (\ref{condition_2}) via Fubini's Theorem, in this way one
gets%
\[
G(s)=\kappa\underset{\mathbb{Q}_{p}^{n}\smallsetminus\mathbb{Z}_{p}^{n}}{\int
}\text{ }\underset{\mathbb{Z}_{p}^{n}}{\int}\frac{\Psi(\xi\cdot x)}{\left(
s+\kappa A_{w}(\left\Vert \xi\right\Vert _{p})\right)  w(\left\Vert
x\right\Vert _{p})}d^{n}\xi d^{n}x\text{ for }\operatorname{Re}\left(
s\right)  >0.
\]

We now assert that%
\[
\frac{1}{\left\vert s+\kappa A_{w}(\left\Vert \xi\right\Vert _{p})\right\vert
w(\left\Vert x\right\Vert _{p})}\in L^{1}\left(  \mathbb{Q}_{p}^{n}%
\smallsetminus\mathbb{Z}_{p}^{n}\times\mathbb{Z}_{p}^{n},d^{n}xd^{n}%
\xi\right)  \text{ for }\operatorname{Re}\left(  s\right)  >0.
\]
Indeed, since%
\begin{equation}
\left\vert s+\kappa A_{w}(\left\Vert \xi\right\Vert _{p})\right\vert
\geq\operatorname{Re}\left(  s\right)  +\kappa A_{w}(\left\Vert \xi\right\Vert
_{p})>\kappa A_{w}(\left\Vert \xi\right\Vert _{p})\text{ for }%
\operatorname{Re}\left(  s\right)  >0, \label{linea}%
\end{equation}
we have
\begin{align*}
\frac{1}{\left\vert s+\kappa A_{w}(\left\Vert \xi\right\Vert _{p})\right\vert
w(\left\Vert x\right\Vert _{p})}  &  <\frac{1}{\kappa A_{w}(\left\Vert
\xi\right\Vert _{p})w(\left\Vert x\right\Vert _{p})}\\
&  \leq\frac{C}{\left\Vert \xi\right\Vert _{p}^{\alpha_{2}-n}w(\left\Vert
x\right\Vert _{p})}\leq\frac{C^{\prime}}{\left\Vert \xi\right\Vert
_{p}^{\alpha_{2}-n}\left\Vert x\right\Vert _{p}^{\alpha_{1}}},
\end{align*}
which is an integrable function for $x\in\mathbb{Q}_{p}^{n}\smallsetminus
\mathbb{Z}_{p}^{n}$, $\xi\in\mathbb{Z}_{p}^{n}$ and $\operatorname{Re}\left(
s\right)  >0$, c.f. Lemma \ref{lemma3}.

In order to calculate an explicit formula for $G(s)$ for $\operatorname{Re}%
\left(  s\right)  >0$ we proceed as follows. We take $U=\left\{
y\in\mathbb{Q}_{p}^{n};\left\Vert y\right\Vert _{p}=1\right\}  $ as before,
then $\mathbb{Q}_{p}^{n}\smallsetminus\mathbb{Z}_{p}^{n}=%
{\textstyle\bigsqcup\nolimits_{i\in\mathbb{N}\smallsetminus\left\{  0\right\}
}}
p^{-i}U$, $\mathbb{Z}_{p}^{n}\smallsetminus\left\{  0\right\}  =%
{\textstyle\bigsqcup\nolimits_{j\in\mathbb{N}}}
p^{j}U$, and
\[
G(s)=\kappa%
{\displaystyle\sum\limits_{i=1}^{\infty}}
{\displaystyle\sum\limits_{j=0}^{\infty}}
\underset{p^{-i}U}{\int}\text{ }\underset{p^{j}U}{\int}\frac{\Psi(\xi\cdot
x)}{\left[  s+\kappa A_{w}(\left\Vert \xi\right\Vert _{p})\right]
w(\left\Vert x\right\Vert _{p})}d^{n}\xi d^{n}x\text{ for }\operatorname{Re}%
\left(  s\right)  >0\text{.}%
\]

We now use the following change of variables:%
\[%
\begin{array}
[c]{ccc}%
p^{-i}U\times p^{j}U & \rightarrow & U\times U\\
&  & \\
\left(  x,\xi\right)  & \rightarrow & \left(  y^{\prime},y\right)
\end{array}
\]
with $x=p^{-i}y^{\prime}$ for $i\in\mathbb{N\smallsetminus}\left\{  0\right\}
$, $\xi=p^{j}y$ for $j\in\mathbb{N}$. Furthermore, $d^{n}\xi d^{n}%
x=p^{-nj+ni}d^{n}yd^{n}y^{\prime}$ for $j\in\mathbb{N}$ and $i\in
\mathbb{N\smallsetminus}\left\{  0\right\}  $. Then
\begin{align*}
G(s)  &  =\kappa\sum_{i=1}^{\infty}p^{in}\underset{U}{\int}\sum_{j=0}^{\infty
}p^{-jn}\underset{U}{\int}\frac{\Psi(p^{j-i}y\cdot y^{\prime})}{\left[
s+\kappa A_{w}(p^{-j})\right]  w(p^{i})}d^{n}yd^{n}y^{\prime}\\
&  =\kappa\sum_{i=1}^{\infty}\frac{p^{in}}{w(p^{i})}\sum_{j=0}^{\infty}%
\frac{p^{-jn}}{s+\kappa A_{w}(p^{-j})}\underset{U}{\int}\underset{U}{\int}%
\Psi(p^{j-i}y\cdot y^{\prime})d^{n}yd^{n}y^{\prime}\text{ (see (\ref{formula}%
))}\\
&  =\kappa(1-p^{-n})\sum_{i=1}^{\infty}\frac{p^{in}}{w(p^{i})}\left(
\sum_{j=i}^{\infty}\frac{(1-p^{-n})p^{-jn}}{s+\kappa A_{w}(p^{-j})}%
-\frac{p^{-n}p^{-n(i-1)}}{s+\kappa A_{w}(p^{1-i})}\right) \\
&  =\kappa(1-p^{-n})\sum_{i=1}^{\infty}\frac{p^{in}}{w(p^{i})}\sum
_{j=i}^{\infty}p^{-jn}\left(  \frac{1}{s+\kappa A_{w}(p^{-j})}-\frac
{1}{s+\kappa A_{w}(p^{-j+1})}\right)  \text{ (see (\ref{fomula_A}))}\\
&  =\kappa^{2}(1-p^{-n})\sum_{i=1}^{\infty}\frac{p^{in}}{w(p^{i})}\sum
_{j=i}^{\infty}\frac{\frac{p^{n}-2}{w(p^{j+1})}+\frac{p^{-n}}{w(p^{j})}%
}{\left(  s+\kappa A_{w}(p^{-j})\right)  \left(  s+\kappa A_{w}(p^{-j+1}%
)\right)  }.
\end{align*}

\end{proof}

\begin{theorem}
\label{Thm4}(i) If $\boldsymbol{W}$ is of polynomial type ($\alpha_{3}=0$,
$\alpha_{1}=\alpha_{2}=\alpha>n$) and $\alpha\geq2n$, then $X\left(
t,\omega;\boldsymbol{W}\right)  $ is recurrent with respect to $\mathbb{Z}%
_{p}^{n}$.

\noindent(ii) If $\boldsymbol{W}$ is of polynomial type ($\alpha_{3}=0$,
$\alpha_{1}=\alpha_{2}=\alpha>n$) and $n<\alpha<2n$, then $X\left(
t,\omega;\boldsymbol{W}\right)  $ is transient with respect to $\mathbb{Z}%
_{p}^{n}$.
\end{theorem}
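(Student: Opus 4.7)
The strategy is to characterize recurrence through the Laplace transform of the first-passage density $f(t)$, and then to read off the dichotomy from the explicit formula for $G(s)$ in Proposition \ref{prop5}. First, taking the Laplace transform of the Volterra identity \eqref{eq:Volterra} from Proposition \ref{prop4} yields
\[
F(s)=\frac{G(s)}{1+G(s)},\qquad \operatorname{Re}(s)>0,
\]
where $F(s)$ is the Laplace transform of $f(t)$. Since $f\geq0$, the total mass $\int_{0}^{\infty}f(t)\,dt=\lim_{s\to 0^{+}}F(s)$ equals $1$ (i.e.\ $X(t,\omega)$ is recurrent with respect to $\mathbb{Z}_{p}^{n}$) if and only if $\lim_{s\to0^{+}}G(s)=+\infty$, and the process is transient otherwise. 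Monotone convergence applies termwise to the positive double series in \eqref{G_s} (note $p\neq2$ gives $p^{n}-2>0$), so $\lim_{s\to 0^{+}}G(s)=G(0)\in[0,+\infty]$.

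Next I would insert the polynomial-type asymptotics. By hypothesis $\alpha_{3}=0$ and $\alpha_{1}=\alpha_{2}=\alpha>n$, so Lemma \ref{lemma3} gives two-sided bounds
\[
w(p^{j})\asymp p^{j\alpha},\qquad A_{w}(p^{-j})\asymp p^{-j(\alpha-n)}
\quad\text{for large } j.
\]
Substituting these into the summand of $G(0)$ in \eqref{G_s}, a straightforward calculation shows
\[
\frac{\tfrac{p^{n}-2}{w(p^{j+1})}+\tfrac{p^{-n}}{w(p^{j})}}
{A_{w}(p^{-j})\,A_{w}(p^{-j+1})}\asymp p^{\,j(\alpha-2n)},
\]
so the inner sum in $j\ge i$ behaves like $\sum_{j\ge i}p^{j(\alpha-2n)}$, while the outer factor is $p^{in}/w(p^{i})\asymp p^{-i(\alpha-n)}$.

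Now I would treat the two regimes separately. If $\alpha\ge 2n$, the inner series $\sum_{j\ge i}p^{j(\alpha-2n)}$ diverges (the ratio $p^{\alpha-2n}\ge1$), so already a single term of the outer sum is infinite and $G(0)=+\infty$, giving recurrence. If $n<\alpha<2n$, the inner geometric series converges to a quantity $\asymp p^{\,i(\alpha-2n)}$; multiplying by $p^{-i(\alpha-n)}$ gives an outer series of the form $\sum_{i\ge 1}p^{-in}$, which converges, so $G(0)<\infty$ and the process is transient.

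The main obstacle is bookkeeping: making sure the two-sided bounds from Lemma \ref{lemma3} transfer to genuine two-sided bounds on the whole summand (not just an upper or lower estimate), so that the dichotomy I derive from the model series $\sum p^{j(\alpha-2n)}$ actually controls $G(0)$ in both directions. The natural way to handle this is to fix $M_{0}$ large so that Lemma \ref{lemma3} applies for all $j\ge M_{0}$, split each sum at $M_{0}$, absorb the finitely many initial terms into a positive finite constant, and then bound the tail above and below by explicit geometric series. Once this is set up, the convergence/divergence analysis above is routine and yields parts (i) and (ii).
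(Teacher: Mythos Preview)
Your proposal is correct and follows the same overall strategy as the paper: reduce recurrence/transience to whether $\lim_{s\to 0^{+}}G(s)$ is infinite or finite via $F(s)=G(s)/(1+G(s))$, and then read this off from the explicit double series \eqref{G_s} using the two-sided bounds $w(p^{j})\asymp p^{j\alpha}$ and $A_{w}(p^{-j})\asymp p^{-j(\alpha-n)}$ supplied by Lemma~\ref{lemma3}. The one difference is in execution for part~(i): the paper keeps $s>0$, links $s$ to an index $j_{0}(s)$ via $s=C_{3}p^{-(j_{0}-1)(\alpha-n)}$, bounds the denominator by $(1+\kappa)^{2}s^{2}$ for $j\ge j_{0}$, and then lets $j_{0}\to\infty$; the borderline case $\alpha=2n$ is handled separately by observing the general term does not tend to zero. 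Your use of monotone convergence (the terms in \eqref{G_s} are nonnegative since $p\neq 2$) to pass directly to $G(0)$ and then compare with the model series $\sum_{j\ge i}p^{j(\alpha-2n)}$ is cleaner and treats $\alpha>2n$ and $\alpha=2n$ uniformly. One small simplification: since hypothesis \eqref{Ec5} and Lemma~\ref{lemma3} hold for \emph{all} $\xi$, your two-sided bounds are global and no splitting at a threshold $M_{0}$ is actually needed.
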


\begin{proof}
By Proposition \ref{prop4}, the Laplace transform of $F(s)$ of $f(t)$ equals
$\frac{G(s)}{1+G(s)}$, where $G(s)$ is the Laplace transform of $g(t)$, and
thus $F\left(  0\right)  =\int_{0}^{\infty}f\left(  t\right)  dt=1-\frac
{1}{1+G(0)}$. Hence in order to prove that $X\left(  t,\omega;\boldsymbol{W}%
\right)  $ is recurrent is sufficient to show that $G(0)=\lim_{s\rightarrow
0}G(s)=\infty$, and to prove that it is transient that $G(0)=\lim
_{s\rightarrow0}G(s)<\infty$.

(i) Take $s\in\mathbb{R}$, $s>0$ and $\alpha_{3}=0$, $\alpha_{1}=\alpha
_{2}=\alpha>n$. First we note that%
\begin{equation}
C_{2}p^{-j\left(  \alpha-n\right)  }\leq A_{w}(p^{-j})\leq A_{w}(p^{-\left(
j-1\right)  })\leq C_{3}p^{-\left(  j-1\right)  \left(  \alpha-n\right)  },
\label{Ec10}%
\end{equation}

c.f. Lemma \ref{lemma3}. We take $s\in\mathbb{R}$ with $s>0$ and set%
\begin{equation}
s=C_{3}p^{-\left(  j_{0}(s)-1\right)  \left(  \alpha-n\right)  }. \label{Ec11}%
\end{equation}
Note that $s\rightarrow0^{+}\Leftrightarrow j_{0}:=j_{0}(s)\rightarrow\infty$.
Then%
\[
\left(  s+\kappa A_{w}(p^{-j})\right)  \left(  s+\kappa A_{w}(p^{-j+1}%
)\right)  \leq\left(  1+\kappa\right)  ^{2}s^{2}\text{ for }j\geq j_{0}%
\]

because $A_{w}(p^{-\gamma})$\ is a decreasing function of $\gamma$.

By (\ref{G_s}),
\begin{align}
G(s)  &  >\kappa^{2}(1-p^{-n})\frac{p^{n}}{w(p)}\sum_{j=1}^{\infty}\frac
{\frac{p^{n}-2}{w(p^{j+1})}+\frac{p^{-n}}{w(p^{j})}}{\left(  s+\kappa
A_{w}(p^{-j})\right)  \left(  s+\kappa A_{w}(p^{-j+1})\right)  }\label{Ec12}\\
&  >\kappa^{2}(1-p^{-n})\frac{p^{n}}{w(p)}\sum_{j=j_{0}}^{\infty}\frac
{\frac{p^{n}-2}{w(p^{j+1})}+\frac{p^{-n}}{w(p^{j})}}{\left(  s+\kappa
A_{w}(p^{-j})\right)  \left(  s+\kappa A_{w}(p^{-j+1})\right)  }\nonumber
\end{align}

with $j_{0}\in\mathbb{N}$. Now by (\ref{Ec5}) with $\alpha_{3}=0$, $\alpha
_{1}=\alpha_{2}=\alpha$,%
\[
\frac{p^{n}-2}{w(p^{j+1})}+\frac{p^{-n}}{w(p^{j})}\geq\left(  \frac{p^{n}%
-2}{C_{1}p}+\frac{p^{-n}}{C_{1}}\right)  p^{-j\alpha}\text{ for }%
j\in\mathbb{N}.
\]
By (\ref{Ec10})-(\ref{Ec11}) and (\ref{Ec12}), we have
\begin{align*}
G(s)  &  >\frac{C}{s^{2}}p^{-j_{0}\alpha}=C^{\prime}\frac{p^{-j_{0}\alpha}%
}{p^{-2j_{0}\left(  \alpha-n\right)  }}\\
&  =C^{\prime}p^{j_{0}\left(  \alpha-2n\right)  }.
\end{align*}

Hence, if $\alpha>2n$,%
\[
\lim_{s\rightarrow0+}G(s)>C^{\prime}\lim_{s\rightarrow0+}p^{j_{0}\left(
\alpha-2n\right)  }=C^{\prime}\lim_{j_{0}\rightarrow\infty}p^{j_{0}\left(
\alpha-2n\right)  }=\infty.
\]

Now if $\alpha=2n$ and $s\in\mathbb{R}$, $s>0$, we have%

\[
\lim_{j\rightarrow\infty}\frac{\frac{p^{n}-2}{w(p^{j+1})}+\frac{p^{-n}%
}{w(p^{j})}}{\left(  s+\kappa A_{w}(p^{-j})\right)  \left(  s+\kappa
A_{w}(p^{-j+1})\right)  }\geq C\lim_{j\rightarrow\infty}p^{j\left(
\alpha-2n\right)  }=C
\]
where $C$ is a positive constant. Hence $\lim_{s\rightarrow0+}G(s)=\infty$.

(ii) In this case $\alpha_{3}=0$, $\alpha_{1}=\alpha_{2}=\alpha>n$, and
$s\in\mathbb{C}$ with $\operatorname{Re}(s)>0$. By (\ref{Ec5}),%
\[
\frac{p^{n}-2}{w(p^{j+1})}+\frac{p^{-n}}{w(p^{j})}\leq\left(  \frac{p^{n}%
-2}{C_{0}p}+\frac{p^{-n}}{C_{0}}\right)  p^{-j\alpha}%
\]
for $j\in\mathbb{N}$. In addition, by Lemma \ref{lemma3},
\[
\frac{1}{\left\vert \left(  s+\kappa A_{w}(p^{-j})\right)  \left(  s+\kappa
A_{w}(p^{-j+1})\right)  \right\vert }\leq\frac{1}{\left[  \kappa A_{w}%
(p^{-j})\right]  ^{2}}\leq\frac{1}{\kappa^{2}C_{2}^{2}p^{-2j\left(
\alpha-n\right)  }}%
\]
for $j\in\mathbb{N}$, c.f. (\ref{linea}).

Hence, by (\ref{G_s}),
\begin{align*}
G(s)  &  \leq C\sum_{i=1}^{\infty}p^{-i\left(  \alpha-n\right)  }\sum
_{j=i}^{\infty}p^{-j\alpha+2j\left(  \alpha-n\right)  }=C\sum_{i=1}^{\infty
}p^{-i\left(  \alpha-n\right)  }\sum_{j=i}^{\infty}p^{-j\left(  2n-\alpha
\right)  }\\
&  =C^{\prime}\sum_{i=1}^{\infty}p^{-in}<\infty\text{ if }2n>\alpha.
\end{align*}

\end{proof}

\bigskip

\bigskip

\end{document}